\documentclass[11pt]{article}
\usepackage{fullpage}
\usepackage{tikz}
\usetikzlibrary{arrows.meta,positioning,matrix,decorations.pathreplacing,decorations.markings,calligraphy,backgrounds}
\usetikzlibrary{calc}

\usepackage{amsmath}
\usepackage{amssymb}
\usepackage{enumitem}
\usepackage{amsthm}
\usepackage[most]{tcolorbox}
\usepackage{subfiles}
\usepackage{algorithm}
\usepackage{algpseudocode}
\usepackage{mathtools}
\usepackage{bm}

\usepackage{abraces}
\usepackage[dvipsnames]{xcolor}
\usepackage{nicematrix}
\usepackage[backref=page]{hyperref}
\hypersetup{
    colorlinks=true,
    linkcolor=blue,
    filecolor=magenta,      
    urlcolor=cyan,
}
\usepackage{thmtools}
\usepackage{thm-restate}
\usepackage{cleveref}
\usepackage{dsfont}
\usepackage{blkarray}
\usepackage{booktabs}
\usepackage{caption}

\newtheorem{theorem}{Theorem}[section]
\newtheorem{corollary}[theorem]{Corollary}

\newtheorem{lemma}[theorem]{Lemma}

\newtheorem{defi}[theorem]{Definition}

\newtheorem{conjecture}[theorem]{Conjecture}

\newtheorem{fact}[theorem]{Fact}

\crefname{defi}{definition}{definitions}
\Crefname{defi}{Definition}{Definitions}

\newcommand{\ket}[1]{\ensuremath{\left|#1\right\rangle}}

\newcommand{\Q}{\ensuremath{\mathsf{Q}}}

\newcommand{\zone}{\{0,1\}}

\newcommand{\cbra}[1]{\left\{#1\right\}}

\newcommand{\OR}{\mathsf{OR}}

\newcommand{\PrMaj}{\mathsf{PrMaj}}
\newcommand{\wt}{\mathsf{wt}}

\newcommand{\umod}{\mathsf{MD}}
\newcommand{\dmod}{\mathsf{MOD}}
\newcommand{\mcut}{\mathsf{MaxCut}}

\newcommand{\odeg}{\mathsf{deg}^{+}}
\newcommand{\ideg}{\mathsf{deg}^{-}}

\allowdisplaybreaks

\title{Quantum Query Complexity of Maximum Degree \\and its Generalizations}

\author{Amin Shiraz Gilani\thanks{QuICS, University of Maryland, USA {\tt asgilani@umd.edu}}
\and
Nikhil S.~Mande\thanks{University of Liverpool, UK {\tt mande@liverpool.ac.uk}}
\and 
Subhasree Patro\thanks{Eindhoven University of Technology, Netherlands {\tt patrofied@gmail.com}}
\and
Nithish Raja\thanks{Eindhoven University of Technology, Netherlands {\tt n.r.raja@tue.nl}}
}
\date{}

\begin{document}

\maketitle

\begin{abstract}
We study the quantum query complexity of several basic graph problems. 
\begin{itemize}
    \item We prove a tight $\Theta(n^{3/2})$ bound for finding a maximum-out-degree vertex in an $n$-vertex tournament, resolving a conjecture of Mande, Paraashar, and Saurabh [FSTTCS'23]. 
    \item Using the same proof idea, we give a direct proof that finding a maximum-degree vertex in an $n$-vertex undirected graph requires $\Theta(n^{3/2})$ quantum queries, a bound implicit in the work of Apers and Lee [CCC'21]. Both lower bounds above hold even for computing only the maximum degree or maximum out-degree.
    \item We also study Maximum $(k,n-k)$-Cut, which asks for the largest cut with one side of size $k$. For every constant $k\geq 1$, we give an $O(n^{2-1/(k+1)})$-query quantum algorithm.
    \item We establish a reduction to Maximum $(k,n-k)$-Cut involving $k$-Graph Collision, which asks whether the marked vertices of a known graph contain a $k$-clique. For $k\geq 3$, this yields unconditional lower bounds with exponents strictly larger than $3/2$, showing that this problem is strictly harder than computing maximum degree. Additionally, under a conjecture we propose for $k$-Graph Collision, the reduction yields a lower bound matching our upper bound above.
\end{itemize}

\end{abstract}

\newpage
\setcounter{page}{1}

\section{Introduction}

Graph problems provide a natural setting for studying the power and limitations of quantum query algorithms. In the adjacency-matrix model, an algorithm accesses a graph by querying whether a given pair of vertices forms an edge, and the goal is to solve a graph problem using as few queries as possible. Quantum algorithms can query this adjacency information in superposition, leading to query savings for several fundamental graph problems. For example, D\"urr et al.~\cite{Durr_Heiligman_Hoyer_Mhalla_2006} established quantum speedups for connectivity and minimum spanning tree. These results form part of a broader study of quantum query complexity for graph problems~\cite{Buhrman_Cleve_de_Wolf_Zalka_1999,Berzina_Dubrovsky_Freivalds_Lace_Scegulnaja_2004,Jeffery_Kimmel_2017,Apers_Lee_21,Terao_Mori_2024,BCGKPW_2024,Cornelissen_Gilani_Patro_2026}.

Subgraph detection has been another central theme in this line of work. The problem of deciding whether an undirected graph contains a triangle has received particular attention, motivating a sequence of quantum algorithms with improved query complexity~\cite{Magniez_Santha_Szegedy_2007,Lee_Magniez_Santha_2013,Jeffery_Kothari_Magniez_2013,Le_Gall_2014,Le_Gall_Nakajima_2017,Carette_Lauriere_Magniez_2020}. More generally, quantum algorithms have been studied for detecting other fixed subgraphs and for deciding minor-closed graph properties~\cite{Zhu_2012,Childs_Kothari_2012, Kulkarni_Podder_2016,Cornelissen_Gilani_Patro_2026, Gilani_Zhou_2026}.

Graph optimization problems have been extensively studied in the standard model of computation and also provide a natural setting for studying the power of quantum query algorithms. A fundamental example is \textsc{MaxCut}: given an undirected graph $G=(V,E)$, find a set $S\subseteq V$ maximizing the number of edges between $S$ and $V\setminus S$. This problem is NP-hard~\cite{Karp72} and has been studied extensively from the perspective of approximation algorithms~\cite{GW95}. A natural variant, Maximum $(k,n-k)$-Cut, requires $S$ to have size exactly $k$ and is W[1]-hard when parameterized by $k$~\cite{Cai_2008}. Such problems motivate studying whether quantum algorithms can reduce the number of queries needed to find an optimal solution, and how this complexity depends on the parameter $k$. Recent work of Terao and Mori~\cite{Terao_Mori_2024} studies parameterized quantum query complexity for graph problems, obtaining upper and lower bounds for parameterized versions of Vertex Cover and Matching.

Despite this progress, the quantum query complexity of some basic graph problems remains unresolved. In this work, we consider the following questions, given query access to the adjacency matrix of an $n$-vertex graph:
\begin{enumerate}
\item What is the quantum query complexity of finding a maximum-degree vertex in an undirected graph, or computing its maximum degree?
\item More generally, what is the quantum query complexity of Maximum $(k,n-k)$-Cut? The case $k=1$ is precisely the problem of computing the maximum degree.
\item What is the quantum query complexity of finding a maximum-out-degree vertex in a tournament, or computing its maximum out-degree?
\end{enumerate}

For the first and third questions, a reasonably straightforward quantum algorithm gives an $O(n^{3/2})$ upper bound: the degree or out-degree of a specified vertex can be computed using $n-1$ queries, and quantum maximum finding~\cite{DH96} uses $O(\sqrt{n})$ such evaluations to find a maximizing vertex. Its degree or out-degree can then be computed with another $n-1$ queries. 

Before our work, the following was the status of lower bounds on the above three problems.
\begin{enumerate}
    \item For undirected graphs, a matching lower bound is implicit in the work of Apers and Lee~\cite[Theorem 37]{Apers_Lee_21}. Their proof (of a query lower bound for minimum cut) encodes a composed counting problem into weighted vertex degrees. A simple query-preserving transformation of their hard instances yields the $\Omega(n^{3/2})$ lower bound for unweighted maximum degree. We give a direct proof using a reduction from $\OR\circ\PrMaj$.
    \item To the best of our knowledge, the second question has not previously been studied in the quantum query model.
    \item For tournaments, Mande, Paraashar, and Saurabh~\cite{MPS23} explicitly asked whether this upper bound is tight.
\end{enumerate}

We establish tight quantum query bounds for finding maximum degree and maximum-out degree of an undirected graph and a tournament, respectively. For Maximum $(k,n-k)$-Cut, we give upper and unconditional lower bounds, and prove that the upper bound is tight assuming a conjecture on $k$-Graph Collision.

\subsection{Our results}

Let $\umod_n$ and $\dmod_n$ denote the tasks of finding a maximum-degree vertex in an $n$-vertex undirected graph and a maximum-out-degree vertex in an $n$-vertex tournament, respectively, and let $\Q(\cdot)$ denote bounded-error quantum query complexity. We refer the reader to \Cref{sec:prelims} for formal definitions.

\begin{restatable}{theorem}{thmmaxdegree}\label{thm:max-degree}
    The bounded-error quantum query complexity of finding a maximum-degree vertex in an $n$-vertex undirected graph is $\Theta(n^{3/2})$, i.e., $\Q(\umod_n) = \Theta(n^{3/2})$.
\end{restatable}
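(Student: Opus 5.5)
The upper bound is the algorithm already sketched in the introduction. Define $f(v)=\deg(v)$, computable exactly with $n-1$ classical queries to the row of $v$. Run the D\"urr--H{\o}yer maximum-finding algorithm~\cite{DH96} over the $n$ vertices. It makes $O(\sqrt{n})$ calls to (a reversible implementation of) $f$, so it uses $O(n^{3/2})$ queries in total and outputs a maximum-degree vertex with bounded error. If the maximum degree value is also wanted, compute it with $n-1$ further queries. The rest of the plan concerns the $\Omega(n^{3/2})$ lower bound.

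For the lower bound I would reduce from $\OR_m\circ\PrMaj_\ell$. Here $\PrMaj_\ell$ is the promise majority problem: the input has Hamming weight either $\ell/2$ or $\ell/2+1$, say, and we must tell which. Since the promise problem has approximate degree and quantum query complexity $\Omega(\ell)$, and $\OR_m$ has complexity $\Theta(\sqrt m)$, the composition theorem for the (negative-weight) adversary bound gives
\[
\Q(\OR_m\circ\PrMaj_\ell)=\Omega(\sqrt m\,\ell).
\]
Taking $m,\ell=\Theta(n)$ yields $\Omega(n^{3/2})$. I would cite the composition theorem for general adversary bounds, which holds for partial functions (Lee--Mittal--Reichardt--\v{S}palek--Szegedy).

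The embedding goes as follows. Take vertices $u_1,\dots,u_m$ (one per $\PrMaj$ block) and $w_1,\dots,w_\ell$. Set the edge $\{u_i,w_j\}$ present iff $x_{ij}=1$, and let every other pair be a non-edge. Each adjacency query is then answered with at most one query to $x$. The degree of $u_i$ is $|x_i|\in\{\ell/2,\ell/2+1\}$, so $\OR\circ\PrMaj(x)=1$ iff some $u_i$ has degree $\ell/2+1$. Two problems remain.

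First, the degrees of the $w_j$ are uncontrolled and could exceed $\ell/2+1$. I would fix this by making $m\le\ell/2$ (for instance $m=\ell/4$), so that $\deg(w_j)\le m<\ell/2$ and the maximum is always attained on the $u$-side. This is the step needing the most care, since padding edges must not interfere with the promise gap.

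Second, finding a maximum-degree vertex does not by itself reveal the maximum value. Given the output vertex $v$, we spend $O(n)$ extra queries to compute $\deg(v)$ and compare it with $\ell/2+1$. This decides $\OR\circ\PrMaj$ with $\Q(\umod_n)+O(n)$ queries, so $\Q(\umod_n)=\Omega(n^{3/2})$. The same argument lower bounds computing the maximum degree itself. Here $n=m+\ell=\Theta(\ell)$, and by adding isolated dummy vertices we can handle arbitrary $n$.

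The main obstacle I expect is the rigorous $\Omega(\sqrt m\,\ell)$ bound for the composed partial function. If composition for partial functions is not available as stated, I would fall back on a direct adversary construction: put uniform weights between a no-instance, where all blocks have weight $\ell/2$, and the yes-instances obtained by raising a single block to weight $\ell/2+1$.
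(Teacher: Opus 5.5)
Your proposal is correct and follows essentially the same route as the paper. Both arguments share the $O(n^{3/2})$ upper bound via quantum maximum finding and the lower bound via a bipartite encoding of $\OR\circ\PrMaj$ into the edges $\{u_i,w_j\}$, the composition theorem, and $O(n)$ extra queries to read off the degree of the output vertex. The one difference is how the $w_j$-side is kept below the threshold: the paper uses $n/2$ blocks of length $n/2$ and places a fixed clique on the $u$-side, lifting $\deg(u_i)$ into $\{3n/4-1,3n/4\}$ above the at most $n/2$ degrees on the other side, whereas you take $m\le \ell/2$ with no clique, which costs only a constant factor.
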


\begin{restatable}{theorem}{thmmod}\label{thm:mod}
    The bounded-error quantum query complexity of finding a maximum-out-degree vertex in an $n$-vertex tournament is $\Theta(n^{3/2})$, i.e., $\Q(\dmod_n) = \Theta(n^{3/2})$.
\end{restatable}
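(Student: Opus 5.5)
The upper bound $\Q(\dmod_n)=O(n^{3/2})$ is the algorithm already sketched in the introduction. We run quantum maximum finding~\cite{DH96} over the $n$ vertices. The comparison oracle evaluates out-degrees, and computing the out-degree of a vertex $v$ takes $n-1$ queries to the pairs $\{v,u\}$. Maximum finding needs $O(\sqrt n)$ such evaluations. Each evaluation is a reversible subroutine of cost $O(n)$, and we reduce its error by standard means (here it is in fact exact). This gives $O(n^{3/2})$ in total. The real work is the lower bound. We will prove it even for computing only the maximum out-degree, by reducing from $\OR_m\circ\PrMaj_\ell$ with $m=\Theta(n)$ and $\ell=\Theta(n)$. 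Its quantum query complexity is $\Theta(\sqrt{m}\cdot\ell)$ by the composition theorem for the adversary bound, together with $\Q(\PrMaj_\ell)=\Theta(\ell)$ for promise majority (inputs of weight $\ell/2$ versus $\ell/2+1$). This yields $\Omega(n^{3/2})$.

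The construction is as follows. Write the input as $x=(x^{(1)},\dots,x^{(m)})$ with each $x^{(i)}\in\zone^\ell$. We build a tournament on vertex set $A\cup B$ with $A=\{a_1,\dots,a_m\}$ and $B=\{b_1,\dots,b_\ell\}$, possibly with some padding gadget vertices, so that $n=\Theta(m+\ell)$. For each $i,j$ we orient $a_i\to b_j$ if $x^{(i)}_j=1$ and $b_j\to a_i$ otherwise. Then each edge query is answered by a single query to $x$, so queries are preserved. The edges inside $A$ and inside $B$ are fixed in advance. We choose them to be regular tournaments when $m$ and $\ell$ are odd, so every $a_i$ gets the same fixed out-degree $(m-1)/2$ from inside $A$. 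Then $\odeg(a_i)=(m-1)/2+|x^{(i)}|$, which equals either $(m-1)/2+\ell/2$ or $(m-1)/2+\ell/2+1$ by the promise.

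The main obstacle is controlling the out-degrees of the $B$-vertices, which depend on $x$ through the column weights. We need them never to reach the threshold $(m-1)/2+\ell/2+1$ that marks an $\OR$-witness. A $b_j$ has out-degree $(\ell-1)/2$ plus the number of $i$ with $x^{(i)}_j=0$, which can be as large as $m$. We plan to fix this by unbalancing. We make $\ell$ much larger than $m$, say $\ell=n/2$ and $m=n/4$, and we replace the regular tournament on $B$ with a transitive-free but low-out-degree structure. More precisely, we add a set $C$ of $\Theta(n)$ dummy vertices. Every $B$-vertex loses to all of $C$, and every $A$-vertex beats all of $C$, which raises all $A$ out-degrees uniformly by $|C|$. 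Inside $C$ we use a regular tournament. The out-degree of a $B$-vertex is then at most $(\ell-1)/2+m$, and the out-degree of a $C$-vertex is at most $(|C|-1)/2+\ell$. Meanwhile every $A$-vertex has out-degree at least $(m-1)/2+|C|+\ell/2$. Choosing $|C|$ a sufficiently large constant multiple of $n$, or adjusting the sizes as needed, keeps all non-$A$ out-degrees strictly below the $A$-threshold. We must still check that the arithmetic closes; if it does not, we tune the parameters until it does.

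With the parameters fixed, the maximum out-degree is $D_0+1$ if $\OR\circ\PrMaj(x)=1$ and $D_0$ otherwise, where $D_0$ is an explicit known value. So any algorithm computing the maximum out-degree decides $\OR_m\circ\PrMaj_\ell$, giving $\Omega(\sqrt{m}\,\ell)=\Omega(n^{3/2})$. A maximum-out-degree vertex can be found and then verified with $n-1$ more queries, which has the same cost. So the bound also applies to $\dmod_n$, proving \Cref{thm:mod}.
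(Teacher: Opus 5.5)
Your proposal is correct. The upper bound and the overall strategy match the paper: reduce from $\OR\circ\PrMaj$, invoke the composition theorem, then spend $n-1$ extra queries on the returned vertex. Your tournament gadget, however, is genuinely different.

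The paper makes the out-degrees of the column vertices $v_j$ input-independent. Each string $x^{(i)}$ gets a complementary pair $u_i,\bar u_i$, oriented $u_i\to v_j\to\bar u_i$ or $\bar u_i\to v_j\to u_i$ according to $x^{(i)}_j$. So each $v_j$ beats exactly one vertex of every pair and has out-degree exactly $3a/2$. The promise $\wt(x^{(i)})\ge a/2$ then automatically keeps $\odeg(\bar u_i)\le 3a/2$, and one extra vertex $p$ fixes parity, giving $n=3a+1$.

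You instead let the out-degrees of your column vertices $b_j$ fluctuate with the input, by up to $m$. You absorb this with a dummy block $C$ that loses to all of $A$ and beats all of $B$. This shifts every $A$-out-degree up by $|C|$ while leaving $B$-out-degrees unchanged. Your version needs no complementary edges, and its free parameter $|C|$ lets you reach every $n$ without a separate padding step. The paper's version buys exact, rather than merely bounded, control of the column out-degrees.

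The arithmetic you left as ``tune if needed'' does close. Since $\PrMaj_\ell$ needs $\ell$ even, $B$ cannot carry a regular tournament. Put a near-regular tournament on $B$ (out-degrees at most $\ell/2$) and a regular one on $C$ ($|C|$ odd), and let $D_0=(m-1)/2+|C|+\ell/2$. Then:
\begin{itemize}
\item the worst-case $B$-out-degree $\ell/2+m$ is at most $D_0$ iff $|C|\ge (m+1)/2$;
\item the $C$-out-degree $(|C|-1)/2+\ell$ is at most $D_0$ iff $\ell\le m+|C|$.
\end{itemize}
So the unbalancing $\ell\gg m$ is unnecessary. The $\ell/2$ terms cancel in the $B$ comparison, and a large $\ell$ only hurts the $C$ comparison. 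Also, ``a constant multiple of $n$'' should read ``of $m$'', since $C$ is counted in $n$. For example, $m$ odd, $\ell=m+1$, $|C|=m$ gives $n=3m+1$ and $D_0=2m$, with all $B$- and $C$-out-degrees at most $(3m+1)/2$.
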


Our lower bounds also hold for the weaker tasks of computing only the maximum degree or maximum out-degree. The matching upper bounds follow from quantum maximum finding~\cite{DH96}, as described above.

We next turn to Maximum $(k,n-k)$-Cut. Our lower bound is expressed in terms of the \emph{$k$-Graph Collision} problem, described below. Fix a known graph $G=(V,E)$, and suppose that an input $x\in\{0,1\}^V$ labels each vertex as marked or unmarked. The problem $\mathsf{GC}^G_k$ asks whether the marked vertices contain a $k$-clique. When $k=2$, this is the standard Graph Collision problem~\cite[Sec~4.2]{Magniez_Santha_Szegedy_2007}.

We introduce this generalization concurrently with Gilani and Zhou~\cite{Gilani_Zhou_2026}, who study it under the name \emph{clique collision}.\footnote{A version of this problem appears implicitly in the subgraph-finding algorithm of Magniez, Santha, and Szegedy~\cite[Section~4.4]{Magniez_Santha_Szegedy_2007}. Specializing their construction to the pattern $H=K_{k+1}$, fix a vertex $v$ outside a set $U$ whose induced graph is known, and mark each $u\in U$ if it is adjacent to $v$. Their checking procedure asks whether $U$ contains a $k$-clique all of whose vertices are marked, which is precisely our $k$-Graph Collision problem on the known graph induced by $U$, with queries to the marking bits implemented by queries to edges incident to $v$.} 

\begin{restatable}{theorem}{thmmaxcut}\label{thm:max-k-cut}
For every constant $k\geq 1$, $\Omega(n^{3/2}) = \Q(\mcut_{k,n})
    = O\!\left(n^{2-\frac{1}{k+1}}\right)$.
Moreover, for every even positive integer $N$, integer $1 \leq k \leq N$, and $N$-vertex graph $G$, 
\[
\Q(\mcut_{k,3N})
    =\Omega\!\left(N\cdot \Q(\mathsf{GC}^G_k)\right).
\]
\end{restatable}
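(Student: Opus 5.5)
We prove the three parts of \Cref{thm:max-k-cut} separately: the lower bound $\Omega(n^{3/2})$, the upper bound $O(n^{2-1/(k+1)})$, and the reduction from $N$ copies of $\mathsf{GC}^G_k$.

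\textbf{Lower bound $\Omega(n^{3/2})$.} For $k=1$, Maximum $(1,n-1)$-Cut is exactly finding a maximum-degree vertex, so \Cref{thm:max-degree} applies. For constant $k\ge 2$ we reduce from $\umod_{n}$ with a padding gadget. Add $k-1$ fresh vertices, each adjacent to every original vertex, with no edges among the fresh vertices. Then any $k$-set containing all fresh vertices and one original vertex $v$ has cut value
\[
(k-1)(n-1)+\deg(v)-(k-1).
\]
Every other $k$-set $S$ containing $j<k-1$ fresh vertices loses roughly $n$ edges, because a fresh vertex outside $S$ contributes about $|S\cap V|$ rather than $n$. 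If this margin turns out to be insufficient, we make the fresh vertices heavier, e.g.\ by replacing each edge to a fresh vertex with a gadget of $\Theta(n)$ parallel paths through extra vertices; this still has $O(n)$ vertices in total. The optimum then reveals the maximum degree, and each query to the new graph costs at most one query to the original.

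\textbf{Upper bound.} Let $t=n^{1/(k+1)}$; we aim for total cost $O(n^{2-1/(k+1)})=O(n^2/t)$. We split vertices by degree relative to a threshold, compute cut values via degrees, and find the maximum with quantum maximum finding~\cite{DH96} over a search space. The cut value of $S$ is
\[
\sum_{v\in S}\deg(v)-2e(S).
\]
The plan has three steps.
\begin{enumerate}
\item Estimate or compute the top degrees. The optimal $S$ must consist of vertices whose degree is within $2\binom{k}{2}$ of the top degrees; otherwise swapping a vertex for a higher-degree one helps. So only candidates with near-maximal degree matter.
\item If the number $m$ of candidates is small, we learn all their degrees exactly at cost $O(mn)$ and learn the induced subgraph on them at cost $O(m^2)$. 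We find these candidates using maximum finding with $\sqrt{n}$-type overhead per candidate, i.e.\ $O(\sqrt{nm}\cdot n)$ queries.
\item If $m$ is large, many vertices have nearly equal degree and the problem reduces to finding $k$ candidates with few internal edges, i.e.\ an ``anti-clique'' search. We perform this by nested Grover search over $k$-tuples, costing $O(\sqrt{m^k}\cdot k^2)$ with $k^2$ edge queries per tuple. Balancing the two cases against $t$ should give the exponent $2-1/(k+1)$.
\end{enumerate}
Getting this balancing and the error handling for the candidate set right is the delicate part.

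\textbf{Reduction $\Q(\mcut_{k,3N})=\Omega(N\cdot\Q(\mathsf{GC}^G_k))$.} This is the main obstacle. We use composition $\OR_N\circ\mathsf{GC}^G_k$, or more precisely a ``max over $N$ blocks'' of $\mathsf{GC}$ instances, in the same spirit as the $\OR\circ\PrMaj$ lower bound used for \Cref{thm:max-degree}. We then apply the fact that $\Q(\OR_N\circ f)=\Theta(\sqrt N\,\Q(f))$. Since that only yields $\sqrt N$, the target $N\cdot\Q(\mathsf{GC})$ suggests instead a $\PrMaj_N$-type or counting outer function with $\Q=\Theta(N)$, composed via the adversary composition theorem $\Q(f\circ g)=\Theta(\Q(f)\Q(g))$.

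The construction uses $3N$ vertices: $N$ copies $v_1,\dots,v_N$ of $V(G)$ and $2N$ auxiliary vertices that are used to equalize degrees (even $N$ makes the balancing exact). Each input bit $x^{(i)}_u$ sets the edge between vertex $u$ and an auxiliary vertex. We choose complementary edges to auxiliary vertices so that every vertex of $V(G)$ has the same degree regardless of $x$, and so that marked vertices have a known structure such that the known edges of $G$ are complemented. A $k$-set of marked vertices forming a clique in $G$ then becomes an independent set of size $k$ in the constructed graph, which gains exactly $\binom{k}{2}$ in cut value. Hence the maximum $(k,3N-k)$-cut value encodes whether some $\mathsf{GC}$ instance is a yes instance, or, with weighting, how many are. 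Obtaining the $N$ factor rather than $\sqrt N$ requires encoding a promise-majority or counting over $N$ independent $\mathsf{GC}$ instances into the single optimum value. This is where we expect the main difficulty, and we would first try shifting the degree of block $i$ by the $i$-th bit of a $\PrMaj$ instance.
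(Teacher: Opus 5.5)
\textbf{Upper bound.} Your Step 1 matches the paper's observation (\Cref{lem:maxcut-degree-sum}): if $A$ is the sum of the $k$ largest degrees, then $A-k(k-1)\le \mcut_{k,n}(G)\le A$, and $A$ costs only $O(kn^{3/2})$ queries. The gap is Step 3.

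Inside your candidate window, degrees differ by up to $k(k-1)$, which is the same range as the correction $2e(S)$. So you cannot treat candidates as equal-degree. Evaluating a $k$-tuple needs the exact degrees of its vertices, and even recognizing a vertex as a candidate needs its degree. Each degree costs $\Theta(n)$ queries. Your nested Grover search therefore costs $O(m^{k/2}\cdot kn)$, not $O(m^{k/2}k^2)$. This is already $\Omega(n^2)$ for $k=2$ and $m=\Theta(n)$. Even with free degrees, $m^{k/2}$ exceeds $n^{2-1/(k+1)}$ for $m=n$ and $k\ge 4$. Meanwhile Step 2's cost $n^{3/2}\sqrt m$ meets the target only for $m\le n^{(k-1)/(k+1)}$. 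So no balancing of your two cases yields the exponent.

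The missing idea is to amortize the degree computations in a quantum walk. Binary search over $t\in[A-k(k-1),A]$ reduces the task to $O(\log k)$ threshold tests of the form ``is there a $k$-set with $|\partial_G(S)|\ge t$?''. Each test is an MNRS walk on the Johnson graph $J(n,r)$ whose data structure stores the \emph{full neighborhood} of every vertex in the current $r$-set. Setup costs $O(rn)$, update costs $O(n)$, checking is free, $\delta>1/r$, and $\varepsilon\ge\binom rk/\binom nk$. This gives $O\bigl(rn+n\sqrt r\,(en/r)^{k/2}\bigr)=O(n^{2-1/(k+1)})$ for $r\approx(en)^{k/(k+1)}$.

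\textbf{Reduction.} You have the outer and inner functions reversed. The factor $N$ must come from an \emph{inner} $\PrMaj_N$. The right composed problem is $\mathsf{GC}^G_k\circ\PrMaj_N$: each vertex $v$ of the single instance $G$ gets its own string $x_v\in\{0,1\}^N$ of weight $N/2$ or $N/2+1$, and $v$ is marked iff $\PrMaj_N(x_v)=1$. A majority over $N$ independent $\mathsf{GC}$ instances is unnecessary and has no evident implementation on $3N$ vertices. Also, your requirement that every vertex of $V(G)$ have degree independent of $x$ erases exactly the information needed: the mark must appear as a $+1$ shift in $\deg(v)$.

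Concretely, put $\bar G$ on $V$. Join $v_i$ to $u_j\in L$ iff $i=j$ or $\{v_i,v_j\}\in E(G)$, so the fixed part of every $\deg(v_i)$ is exactly $N$. Join $v_i$ to $w_j\in R$ iff $x_{v_i}[j]=1$. Then $\deg(v_i)=N+\wt(x_{v_i})$. Every $k$-set $S\subseteq V$ satisfies $|\partial(S)|=\sum_{v\in S}\deg(v)-2|E(\bar G[S])|\le k(3N/2+1)$, with equality iff $S$ is an all-marked clique of $G$. Auxiliary vertices have degree at most $N$, so no set containing one reaches this threshold. Each edge query costs at most one input query, and \Cref{thm:Q_composition} with $\Q(\PrMaj_N)=\Omega(N)$ finishes the argument. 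Your closing idea of shifting degrees by a $\PrMaj$ bit is the right one, applied per vertex rather than per block. Taking $G=K_N$ together with \Cref{fact:gc-lower-bound} then also gives $\Omega(n^{3/2})$ directly.

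\textbf{Lower bound.} Your alternative route to $\Omega(n^{3/2})$ via computing $\Delta(G)$ is valid once fixed. The plain gadget does fail: for $k=2$ and $G=K_n$ minus a perfect matching, a non-adjacent original pair has cut $2n-2>(n-1)+\Delta(G)$. The ``parallel paths'' fix as described would need $\Theta(n^2)$ vertices. Instead, join all $k-1$ fresh vertices to a common independent set of $2n$ new vertices; the optimum is then $3n(k-1)-(k-1)+\Delta(G)$ for $n>3k$. Note that this route needs the lower bound for computing the value $\Delta(G)$, which the paper's reduction supplies.
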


Combining the reduction in \Cref{thm:max-k-cut} with the lower bounds for clique collision of Gilani and Zhou~\cite[Theorem~1.2]{Gilani_Zhou_2026} gives stronger unconditional lower bounds for every fixed $k\geq 3$.

\begin{restatable}{corollary}{corunconditionalmaxcut}\label{cor:max-k-cut-unconditional}
For every fixed $k\geq 3$,
\begin{equation*}
    \Q(\mcut_{k,n}) = \Omega\left(n^{1+\frac{\ell(2k-\ell+1)}{2k(\ell+1)}-o(1)}\right), \qquad \textnormal{where } \ell=\left\lceil\frac{\sqrt{8k+9}-3}{2}\right\rceil.
\end{equation*}
\end{restatable}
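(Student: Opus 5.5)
This corollary follows by combining the reduction in \Cref{thm:max-k-cut} with the clique-collision lower bound of Gilani and Zhou~\cite[Theorem~1.2]{Gilani_Zhou_2026}. We assume that bound takes the following form: for every fixed $k\geq 3$ and infinitely many (in fact all sufficiently large) $N$, there is an $N$-vertex graph $G_N$ with
\[
\Q(\mathsf{GC}^{G_N}_k)=\Omega\!\left(N^{\frac{\ell(2k-\ell+1)}{2k(\ell+1)}-o(1)}\right),
\qquad \ell=\left\lceil\tfrac{\sqrt{8k+9}-3}{2}\right\rceil .
\]
Plugging $G_N$ into the second part of \Cref{thm:max-k-cut} gives
\[
\Q(\mcut_{k,3N})=\Omega\!\left(N^{1+\frac{\ell(2k-\ell+1)}{2k(\ell+1)}-o(1)}\right)
\]
for even $N\geq k$. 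This is the claimed bound at input size $n=3N$.

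The remaining work is to pass from $n$ of the form $3N$ with $N$ even to arbitrary $n$. First, I will note that $\Q(\mcut_{k,n})$ is monotone non-decreasing in $n$ up to constants. An instance on $n'\leq n$ vertices embeds into an $n$-vertex instance by adding $n-n'$ isolated vertices. When $k$ is constant and $n'\geq 2k$, an optimal $(k,n-k)$-cut can be taken to avoid the isolated vertices: swapping an isolated vertex on the $k$-side for a non-isolated vertex on the other side does not decrease the cut value. The reduction therefore preserves the optimal cut value, and it preserves the witness as well, after a post-processing step. Queries to edges touching the padding vertices are answered as $0$ without querying the input.

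Second, given $n$, I will take $N$ to be the largest even integer with $3N\leq n$ such that Gilani and Zhou's bound applies at $N$. Then $N=\Theta(n)$, and the constant factor is absorbed into the $n^{-o(1)}$ term. If their theorem only supplies graphs for a sparse set of $N$ (for example, powers of some base), the ratio between consecutive admissible values is bounded, so $N=\Theta(n)$ still holds.

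The main obstacle I expect is aligning the precise form of \cite[Theorem~1.2]{Gilani_Zhou_2026} with the hypotheses of \Cref{thm:max-k-cut}: namely, which $N$ it covers, that $N$ is even, and that the exponent matches exactly the expression $\frac{\ell(2k-\ell+1)}{2k(\ell+1)}$ with the stated $\ell$. If their bound is phrased as a maximum over $\ell$, I will check that the optimum is the stated $\ell$. The ceiling formula should come from comparing consecutive values of $\ell$, i.e., solving $\ell^2+3\ell-2k\geq 0$, whose positive root is $\frac{\sqrt{8k+9}-3}{2}$.

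Two loose ends remain. If their graphs have an odd number of vertices, I will add one isolated vertex, which cannot create new cliques and so cannot change $\Q(\mathsf{GC}_k)$. Finally, I will check that the exponent exceeds $1/2$ for $k\geq 3$, so that the corollary genuinely improves on the $\Omega(n^{3/2})$ bound. For $k=3$ we get $\ell=2$, and the exponent is $\frac{2\cdot 5}{18}=5/9>1/2$.
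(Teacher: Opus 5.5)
Your overall route matches the paper's: feed the Gilani--Zhou graphs into the reduction of \Cref{thm:max-k-cut}, pad to arbitrary $n$, and choose $\ell$ by comparing consecutive values. Your inequality $\ell^2+3\ell-2k\ge 0$ is the paper's condition $(\ell+1)(\ell+2)\ge 2k+2$. You should also check that this $\ell$ satisfies $2\le\ell<k$ for $k\ge 3$, since the cited theorem is stated for $k>\ell\ge 2$.

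The genuine problem is your padding step. You claim that for constant $k$ and $n'\ge 2k$ an optimal $k$-set can be taken to avoid isolated vertices, via a swap. Both the claim and the swap are false. Take $k=2$ and the star $K_{1,m}$ with center $c$ and $m\ge 3$. Its best $2$-cut is $m-1$ (the center plus a leaf). After adding an isolated vertex $z$, the set $\{c,z\}$ cuts all $m$ edges, and swapping $z$ for any leaf lowers the value to $m-1$. So isolated-vertex padding does not preserve the maximum $k$-cut in general, and your argument does not establish monotonicity of $\Q(\mcut_{k,n})$ in $n$.

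The fix is that you never need general monotonicity. You only need the specific instances $H_x$ from the reduction to survive padding, and the reduction only asks whether the threshold $k(3N/2+1)$ is reached. Every vertex of $H_x$ has degree at most $3N/2+1$, and padding vertices have degree $0$. Hence any $k$-set containing a padding vertex cuts at most $(k-1)(3N/2+1)<k(3N/2+1)$ edges, and the padded graph meets the threshold if and only if $H_x$ does. This is exactly the paper's argument.

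Alternatively, your swap does work for $H_x$ if you swap in a vertex $v\in V\setminus S$. Each such $v$ has degree at least $3N/2$ and at most $k-1$ neighbours in $S$, so the cut changes by at least $3N/2-2(k-1)>0$ for large $N$.
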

In particular, for $k=3,4,5$, this yields lower bounds of $n^{14/9-o(1)}$, $n^{19/12-o(1)}$, and $n^{8/5-o(1)}$, respectively.

For $k=2$, it is a standard conjecture that there exist $N$-vertex graphs $G$ for which $\Q(\mathsf{GC}^G_2)=\Omega(N^{2/3})$ (see, for example,~\cite{Cornelissen_Gilani_Patro_2026} and the references therein). More generally, a standard quantum-walk argument gives $\Q(\mathsf{GC}^G_k)=O(N^{1-1/(k+1)})$ for every constant $k\geq 2$ and every $N$-vertex graph $G$. We conjecture that this upper bound is tight in the worst case.

\begin{conjecture}\label{conj:kGC}
For every constant $k\geq 2$, there exists a family of $N$-vertex graphs $G$ such that $\Q(\mathsf{GC}^G_k)=\Omega\!\left(N^{1-\frac{1}{k+1}}\right)$.
\end{conjecture}

Combining \Cref{conj:kGC} with \Cref{thm:max-k-cut} gives the following conditional tight quantum query bound for Maximum $(k,n-k)$-Cut.
\begin{corollary}\label{cor:max-k-cut-conj}
Assuming \Cref{conj:kGC}, for every constant $k\geq 2$, we have $\Q(\mcut_{k,n})=\Theta\!\left(n^{2-\frac{1}{k+1}}\right)$.
\end{corollary}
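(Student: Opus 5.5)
The corollary is immediate once the two directions are matched: the upper bound is already in \Cref{thm:max-k-cut}, and the lower bound comes from combining the reduction in \Cref{thm:max-k-cut} with \Cref{conj:kGC}.

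\textbf{Upper bound.} For every constant $k\geq 1$, \Cref{thm:max-k-cut} gives $\Q(\mcut_{k,n}) = O(n^{2-1/(k+1)})$, so in particular for every constant $k\geq 2$.

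\textbf{Lower bound.} Fix a constant $k\geq 2$.

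1. By \Cref{conj:kGC}, there is a family of $N$-vertex graphs $G_N$ with $\Q(\mathsf{GC}^{G_N}_k)=\Omega(N^{1-1/(k+1)})$. The conjecture promises this only along some infinite family of sizes, so I pass to even $N$. If the family does not contain arbitrarily large even $N$, I add one isolated vertex to $G_N$. An isolated vertex lies in no $k$-clique for $k\geq 2$, so its marking bit is irrelevant and the complexity is unchanged; the asymptotic bound is also unaffected.

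2. For each such even $N$ (with $N\geq k$, which holds for large $N$ since $k$ is constant), the reduction in \Cref{thm:max-k-cut} gives
\[
\Q(\mcut_{k,3N}) = \Omega\!\left(N\cdot \Q(\mathsf{GC}^{G_N}_k)\right) = \Omega\!\left(N^{2-\frac{1}{k+1}}\right).
\]

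3. The conjecture only guarantees sizes along a family, not every $n$, so I extend the bound to all $n$ by padding. I take the family sizes to be unbounded and suitably dense; the claim should be read as holding for all $n$ or along the resulting family, since $\Q(\mcut_{k,n})$ is monotone in $n$ up to constants. To justify monotonicity, embed an instance on $n'\leq n$ vertices into $n$ vertices by adding $n-n'$ isolated vertices. An optimal $k$-set can be taken among the original vertices: isolated vertices contribute nothing to the cut, and for $n'\geq 2k$ some choice of $k$ original vertices is always at least as good as any set containing isolated vertices. This monotonicity is the one step needing care; everything else is direct substitution.

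4. Setting $n=3N$ (or rounding as in step 3) gives $\Q(\mcut_{k,n}) = \Omega(n^{2-1/(k+1)})$, since $k$ is constant and so the factor $3$ is absorbed into the constant.

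Combining the upper and lower bounds yields $\Q(\mcut_{k,n})=\Theta(n^{2-1/(k+1)})$.
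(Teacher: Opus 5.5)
Your route is the same as the paper's. The upper bound comes from \Cref{thm:max-k-cut}. The lower bound comes from plugging \Cref{conj:kGC} into the reduction $\Q(\mcut_{k,3N})=\Omega(N\cdot\Q(\mathsf{GC}^G_k))$ and then padding to arbitrary $n$. Steps 1, 2 and 4 are fine; adding one isolated vertex to fix parity is harmless for $k\ge 2$.

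\textbf{The gap is in step 3.} You correctly flag it as the delicate step, but you justify it with a false claim. After adding isolated vertices, it is not true that an optimal $k$-set can always be taken among the original vertices when $n'\ge 2k$. Take $G'$ to be the star $K_{1,n'-1}$ with center $c$, and $k=2$, $n'\ge 4$. Every $2$-subset of $V(G')$ has cut at most $n'-2$. But $\{c,z\}$, with $z$ an added isolated vertex, has cut $n'-1$. Padding effectively lets the solver choose fewer than $k$ original vertices, and this can strictly increase the optimum. So an algorithm for $\mcut_{k,n}$ on the padded graph need not return $\mcut_{k,n'}(G')$, and your argument does not establish monotonicity of $\Q(\mcut_{k,n})$ in $n$ for general graphs.

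\textbf{The repair is the paper's.} Pad only the hard instances, and preserve only the threshold decision rather than the value (see the proofs of \Cref{thm:max-k-cut} and \Cref{cor:max-k-cut-unconditional}). With $N=2\lfloor n/6\rfloor$, every vertex of the reduction graph $H_x$ has degree at most $3N/2+1$. Hence any $k$-set containing at least one added vertex has cut at most $(k-1)(3N/2+1)<k(3N/2+1)$. So the padded graph reaches the threshold $k(3N/2+1)$ iff $H_x$ does, iff $(\mathsf{GC}^G_k\circ\PrMaj_N)(x)=1$. The reduction therefore goes through on $n$ vertices and gives $\Omega(N^{2-1/(k+1)})=\Omega(n^{2-1/(k+1)})$.

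This uses \Cref{conj:kGC} for all sufficiently large (even) $N$, which is how the paper reads "a family of $N$-vertex graphs". Your hedging about a sparse family of sizes concerns how to read the conjecture, not something the proof has to handle.
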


\subsection{Our techniques}
We begin with the lower bounds in \Cref{thm:max-degree,thm:mod}, then describe the upper and lower bounds in \Cref{thm:max-k-cut}.

For the lower bounds in \Cref{thm:max-degree,thm:mod}, we reduce from the following Boolean problem. We are given $m$ strings $x^{(1)},\ldots,x^{(m)}\in\{0,1\}^m$, each promised to have Hamming weight in $\{m/2,m/2+1\}$, and must decide whether any string has Hamming weight $m/2+1$. This is precisely the composed function $\OR_m\circ\PrMaj_m$, where $\PrMaj_m$ distinguishes the two promised Hamming weights (see \Cref{def:prmaj}).

Writing $b=(\OR_m\circ\PrMaj_m)(x)$, we construct an undirected graph $G_x$ on $2m$ vertices and a tournament $T_x$ on $3m+1$ vertices satisfying
\[
\Delta(G_x)=3m/2-1+b \qquad\text{and}\qquad \Delta^+(T_x)=3m/2+b.
\]
Thus, computing either $\Delta(G_x)$ or $\Delta^+(T_x)$ determines $(\OR_m\circ\PrMaj_m)(x)$. The constructions are depicted in \Cref{fig:max-degree-new} and \Cref{fig:mod-new}, respectively.

\begin{figure}[ht!]
\centering
\begin{tikzpicture}[
    scale=1,
    vertex/.style={circle,draw,minimum size=7mm,inner sep=0pt},
    every node/.style={font=\small}
]

\node[vertex] (ui)  at (0,1.2) {$u_i$};
\node[vertex] (ubi) at (0,-1.2) {$\bar u_i$};
\node[vertex] (p)   at (-2.2,0) {$p$};

\node at (0,2.25) {$\vdots$};
\node at (0,-2.25) {$\vdots$};

\draw[dashed,rounded corners]
    (-2.8,-2.8) rectangle (1.0,2.8);

\node at (-0.9,3.15)
    {$L=\{u_1,\bar u_1,\ldots,u_m,\bar u_m,p\}$};

\node[align=center] at (-0.9,-3.25)
    {fixed regular tournament on $L$};

\node[vertex] (v1) at (6,2.0) {$v_1$};
\node[vertex] (vj) at (6,0) {$v_j$};
\node[vertex] (vm) at (6,-2.0) {$v_m$};

\node at (6,1.05) {$\vdots$};
\node at (6,-1.05) {$\vdots$};

\draw[dashed,rounded corners]
    (5.2,-2.8) rectangle (6.8,2.8);

\node at (6,3.15) {$R=\{v_1,\ldots,v_m\}$};

\node[align=center] at (6,-3.25)
    {fixed tournament on $R$};

\draw[dotted] (ui) -- (v1);
\draw[dotted] (ui) -- (vj);
\draw[dotted] (ui) -- (vm);

\draw[dotted] (ubi) -- (v1);
\draw[dotted] (ubi) -- (vj);
\draw[dotted] (ubi) -- (vm);

\node at (3.2,2.65) {input-dependent};

\draw[-] (p) -- (v1);
\draw[-] (p) -- (vj);
\draw[-] (p) -- (vm);
\end{tikzpicture}
\caption{Construction of the tournament $T_x$ from an input $x=(x^{(1)},\ldots,x^{(m)})$ to $\OR_m\circ\PrMaj_m$.
The orientations inside $L$ form a fixed regular tournament, so every vertex of $L$ has out-degree $m$ within $L$. The orientations inside $R$, and between $p$ and $R$, are fixed so that every $v_j$ has out-degree $m/2$ within $R\cup\{p\}$.
The orientations of the dashed edges (i.e., the edges between $v_j$ and $\{u_i,\bar u_i\}$) depend on $x^{(i)}_j$. Specifically, if $x^{(i)}_j=1$ then set $u_i\to v_j\to\bar u_i$, and if $x^{(i)}_j=0$ then set $\bar u_i\to v_j\to u_i$. Thus,
$\deg^+_{T_x}(u_i)=m+\wt(x^{(i)})$ and
$\deg^+_{T_x}(\bar u_i)=m+(m-\wt(x^{(i)}))$.
Consequently, $\Delta^+(T_x)=3m/2$ if $(\OR_m\circ\PrMaj_m)(x)=0$, and
$\Delta^+(T_x)=3m/2+1$ if $(\OR_m\circ\PrMaj_m)(x)=1$.}
\label{fig:mod-new}
\end{figure}

\begin{figure}[t]
\centering
\begin{tikzpicture}[
    scale=1,
    vertex/.style={circle,draw,minimum size=7mm,inner sep=0pt},
    every node/.style={font=\small}
]

\node[vertex] (u1) at (0,2.0) {$u_1$};
\node[vertex] (ui) at (0,0) {$u_i$};
\node[vertex] (um) at (0,-2.0) {$u_m$};

\node at (0,1.05) {$\vdots$};
\node at (0,-1.05) {$\vdots$};

\draw[dashed,rounded corners]
    (-0.8,-2.8) rectangle (0.8,2.8);

\node at (0,3.15)
    {$L=\{u_1,\ldots,u_m\}$};

\node[align=center] at (0,-3.25)
    {fixed clique on $L$};

\node[vertex] (v1) at (6,2.0) {$v_1$};
\node[vertex] (vj) at (6,0) {$v_j$};
\node[vertex] (vm) at (6,-2.0) {$v_m$};

\node at (6,1.05) {$\vdots$};
\node at (6,-1.05) {$\vdots$};

\draw[dashed,rounded corners]
    (5.2,-2.8) rectangle (6.8,2.8);

\node at (6,3.15)
    {$R=\{v_1,\ldots,v_m\}$};

\node[align=center] at (6,-3.25)
    {fixed independent set on $R$};

\draw[dotted] (u1) -- (v1);
\draw[dotted] (u1) -- (vj);
\draw[dotted] (u1) -- (vm);

\draw[dotted] (ui) -- (v1);
\draw[dotted] (ui) -- (vj);
\draw[dotted] (ui) -- (vm);

\draw[dotted] (um) -- (v1);
\draw[dotted] (um) -- (vj);
\draw[dotted] (um) -- (vm);

\node at (3,2.65) {input-dependent};

\end{tikzpicture}
\caption{Construction of the graph $G_x$ from an input
$x=(x^{(1)},\ldots,x^{(m)})$ to $\OR_m\circ\PrMaj_m$.
The set $L$ induces a fixed clique, while $R$ induces a fixed independent set.
The dotted edges (i.e., the edges between $u_i$ and $v_j$) depend on
$x^{(i)}_j$. Specifically, $\{u_i,v_j\}$ is an edge if and only if
$x^{(i)}_j=1$. Thus, $\deg_{G_x}(u_i)=m-1+\wt(x^{(i)})$. Consequently,
$\Delta(G_x)=3m/2-1$ if $(\OR_m\circ\PrMaj_m)(x)=0$, and
$\Delta(G_x)=3m/2$ if $(\OR_m\circ\PrMaj_m)(x)=1$.}
\label{fig:max-degree-new}
\end{figure}

Our constructions ensure that each query to an edge of $G_x$ or $T_x$ can be simulated using at most one query to the input strings. After finding a maximizing vertex, we compute its degree or out-degree using $O(m)$ additional queries, thereby evaluating $\OR_m\circ\PrMaj_m$. Consequently, 
\[
\Q(\umod_{2m})\geq\Q(\OR_m\circ\PrMaj_m)-O(m), \qquad \Q(\dmod_{3m+1})\geq\Q(\OR_m\circ\PrMaj_m)-O(m).
\]
See \Cref{lem:ormaj_to_maxdeg,lem:ormaj_to_maxoutdeg} for the precise statements.

By \Cref{thm:Q_composition}, we have $\Q(\OR_m\circ\PrMaj_m)=\Omega(\Q(\OR_m)\Q(\PrMaj_m))$. Combining the lower bounds $\Q(\OR_m)=\Omega(\sqrt{m})$ and $\Q(\PrMaj_m)=\Omega(m)$ from \Cref{thm:or_lower_bound,thm:prmaj_lower_bound} gives $\Q(\OR_m\circ\PrMaj_m)=\Omega(m^{3/2})$. The reductions therefore yield the lower bounds in \Cref{thm:max-degree,thm:mod}, with routine padding extending them to arbitrary numbers of vertices.

In particular, \Cref{thm:mod} resolves a conjecture of Mande, Paraashar, and Saurabh~\cite{MPS23}, who asked whether the $O(n^{3/2})$ quantum query upper bound for finding a maximum-out-degree vertex in a tournament is tight.

We next outline the upper and lower bounds in \Cref{thm:max-k-cut}, beginning with the upper bound.

For the upper bound in \Cref{thm:max-k-cut}, we first find the $k$ vertices of largest degree and let $A$ be the sum of their degrees. Repeatedly applying quantum maximum finding, with degrees computed exactly, uses $O(n^{3/2})$ queries for constant $k$. Every $k$-vertex cut has size at most $A$. Conversely, the cut defined by the selected vertices has size at least $A-k(k-1)$, since each edge between two selected vertices contributes twice to their degree sum but nothing to the cut. Thus, $A-k(k-1)\leq\mcut_{k,n}(G)\leq A$.

We then binary search over this interval to determine $\mcut_{k,n}(G)$. Each threshold test decides whether there exists a $k$-vertex set inducing a cut of at least the specified size, using $O(n^{2-1/(k+1)})$ queries via a standard quantum-walk algorithm with appropriately chosen parameters. For constant $k$, only constantly many tests are needed, so amplifying the subroutines to ensure bounded error overall preserves this query bound,\footnote{For general $1\leq k\leq n$, our algorithm uses $O\!\left(n^{2-1/(k+1)}\log(k+1)\log\log(k+2)\right)$ queries (see \Cref{lem:UpperBoundMaxCut} for a formal statement).} and thus
\[
\Q(\mcut_{k,n})=O(n^{2-1/(k+1)}).
\]

For the lower bound in \Cref{thm:max-k-cut}, fix an $N$-vertex graph $G$ and associate with every vertex $v\in V(G)$ a string $x_v\in\{0,1\}^N$, promised to have Hamming weight either $N/2$ or $N/2+1$. We regard $v$ as marked in the latter case. Determining whether the marked vertices contain a $k$-clique is precisely the composed problem $\mathsf{GC}^G_k\circ\PrMaj_N$.

We reduce $\mathsf{GC}^G_k\circ\PrMaj_N$ to $\mcut_{k,3N}$. The idea is to construct a graph in which marked vertices have slightly higher degrees than unmarked vertices, while every non-edge of $G$ within a selected set reduces its cut size. The constructed graph $H$ has three sets of $N$ vertices each, one of which is identified with $V(G)$. Inside $V(G)$, we place the edges of $\overline{G}$. We use one auxiliary set to make the contribution of the fixed edges incident to every $v\in V(G)$ equal to $N$, and the other to encode $x_v$. Thus, $\deg_H(v)=N+\wt(x_v)$ for every $v\in V(G)$. Hence, for every $k$-set $S\subseteq V(G)$,
\[
|\partial_H(S)|=\sum_{v\in S}\deg_H(v)-2|E(\overline{G}[S])|=\frac{3kN}{2}+\bigl|\{v\in S \mid \wt(x_v)=N/2+1\}\bigr|-2|E(\overline{G}[S])|.
\]

It follows that $|\partial_H(S)|=3kN/2+k$ if and only if every vertex of $S$ is marked and $S$ is a clique in $G$. Our construction also ensures that every auxiliary vertex has degree at most $N$, whereas every vertex in $V(G)$ has degree at most $3N/2+1$. Thus, any $k$-set containing an auxiliary vertex has cut size strictly below $3kN/2+k$.

Our construction ensures that every query to the adjacency matrix of $H$ can be simulated using at most one query to the strings $x_v$, giving a query-preserving reduction from $\mathsf{GC}^G_k\circ\PrMaj_N$ to $\mcut_{k,3N}$. Combining this reduction with \Cref{thm:Q_composition,thm:prmaj_lower_bound}, we obtain
\[
\Q(\mcut_{k,3N})=\Omega\!\left(N\cdot\Q(\mathsf{GC}^G_k)\right).
\]

\section{Preliminaries}\label{sec:prelims}

For a positive integer $n$, we write $[n]=\cbra{1,2,\dots,n}$. We write $p(n)=O(q(n))$, $p(n)=\Omega(q(n))$, and $p(n)=\Theta(q(n))$ to mean $p(n)\in O(q(n))$, $p(n)\in\Omega(q(n))$, and $p(n)\in\Theta(q(n))$, respectively. We denote the Hamming weight of a binary string $x$ by $\mathsf{wt}(x)$. All logarithms are taken base $2$. All graphs considered in this work are simple, with no self-loops or multiple edges. Unless stated otherwise, graphs on $n$ vertices have vertex set $[n]$. For $S\subseteq V(G)$, we write $G[S]$ for the subgraph of $G$ induced by $S$. For a finite set $S$, we write $\binom{S}{k}$ for the family of its $k$-element subsets. We use $\sqcup$ to denote disjoint union.

We identify an undirected graph $G$ on $n$ vertices with a Boolean string in $\zone^{\binom{n}{2}}$, indexed by unordered pairs $\{i,j\}\in\binom{[n]}{2}$, where the bit at index $\{i,j\}$ is $1$ if and only if $\{i,j\}\in E(G)$. We write $\deg_G(v)$ for the degree of a vertex $v$ and $\Delta(G)$ for the maximum degree of $G$, dropping the subscript $G$ when clear from context. For $S\subseteq V(G)$, we write $\partial_G(S)$ for the set of edges with exactly one endpoint in $S$. The complement $\overline{G}$ has vertex set $V(G)$ and edge set $E(\overline{G})=\binom{V(G)}{2}\setminus E(G)$. We write $K_n$ for the complete graph on $n$ vertices.

A tournament $T$ is an orientation of a complete graph. We identify a tournament on $n$ vertices with a Boolean string in $\zone^{\binom{n}{2}}$, indexed by unordered pairs $\{i,j\}\in\binom{[n]}{2}$. For $i<j$, the bit indexed by $\{i,j\}$ is $1$ if and only if the edge is oriented from $i$ to $j$. We write $u\to v$ when the edge between $u$ and $v$ is oriented from $u$ to $v$. We write $\odeg_T(v)$ and $\ideg_T(v)$ for the out-degree and in-degree of a vertex $v$, respectively, and $\Delta^+(T)$ for the maximum out-degree of $T$. For $S\subseteq V(T)$, we write $\partial_T^+(S)$ for the set of directed edges from $S$ to $V(T)\setminus S$. A tournament in which every vertex has the same out-degree is called a \emph{regular} tournament.

\subsection{Functions and relations of interest}

\begin{defi}[Maximum degree]\label{def:umod}
    Viewing each $G \in \zone^{\binom{n}{2}}$ as an undirected graph on $[n]$, define the relation $\umod_n \subseteq \zone^{\binom{n}{2}} \times [n]$ by $(G,v) \in \umod_n \iff \deg_G(v) = \max_{u \in [n]} \deg_G(u)$.
\end{defi}

\begin{defi}[Maximum out-degree]\label{def:dmod}
    Viewing each $T \in \zone^{\binom{n}{2}}$ as a tournament on $[n]$, define the relation $\dmod_n \subseteq \zone^{\binom{n}{2}} \times [n]$ by $(T,v) \in \dmod_n \iff \deg_T^+(v) = \max_{u \in [n]} \deg_T^+(u)$.
\end{defi}

Note that the relation defined in \Cref{def:dmod} is equivalent to the relation $\mathsf{MOD}_n$ defined in \cite[Page~3]{MPS23}.

\begin{defi}[$\mcut_{k,n}$]\label{def:maxcut}
    Let $1 \leq k \leq n$. Viewing each $G \in \zone^{\binom{n}{2}}$ as an undirected graph on $[n]$, define $\mcut_{k,n}(G) := \max_{S \subseteq [n], |S|=k} |\partial_G(S)|$.
\end{defi}

\begin{defi}[OR]\label{def:or}
    Let $\mathsf{OR}_n:\{0,1\}^n\rightarrow \{0,1\}$ be a Boolean function defined as $\mathsf{OR}_n(x) = 1 \iff \mathsf{wt}(x) \geq 1$.
\end{defi}

\begin{theorem}[{\cite[Theorem~3.3]{Bennett_Bernstein_Brassard_Vazirani_1997}}]\label{thm:or_lower_bound}
    For every positive integer $n$, $\Q(\mathsf{OR}_n) = \Omega(\sqrt{n})$.
\end{theorem}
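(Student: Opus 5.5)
The statement is the lower bound $\Q(\mathsf{OR}_n)=\Omega(\sqrt{n})$ of Bennett, Bernstein, Brassard and Vazirani. I would prove it with the hybrid argument, which needs only the query model. Fix any $T$-query algorithm $\mathcal{A}$ computing $\mathsf{OR}_n$ with error at most $1/3$. Write its final state on input $x$ as $\ket{\psi^x_T}$, obtained by alternating the fixed unitaries $U_0,\dots,U_T$ with the oracle $O_x$. We compare the run on $x=0^n$ with the runs on the inputs $e_i$, $i\in[n]$, where $e_i$ has a single $1$ in position $i$. Correctness forces $\|\ket{\psi^{0}_T}-\ket{\psi^{e_i}_T}\|\ge c$ for every $i$, for some absolute constant $c>0$. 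This holds because two states whose measurement statistics for the output bit differ by at least $1/3$ must be at constant Euclidean distance.

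The key step is a per-query bound. Let $\ket{\psi^0_t}$ be the state just before the $(t+1)$-st query on input $0^n$, and let $q_i(t)$ be the squared amplitude on basis states whose query register holds index $i$. Replacing $O_0$ by $O_{e_i}$ at that single step changes the state by at most $2\sqrt{q_i(t)}$, since $O_{e_i}$ and $O_0$ differ only on the index-$i$ subspace. A telescoping hybrid argument then switches the oracle one step at a time, using that unitaries preserve norms. This gives
\[
\|\ket{\psi^{0}_T}-\ket{\psi^{e_i}_T}\|\le 2\sum_{t=0}^{T-1}\sqrt{q_i(t)}.
\]

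Now sum over $i$. Since $\sum_i q_i(t)\le 1$ for each $t$, Cauchy--Schwarz gives
\[
\sum_i\sum_t\sqrt{q_i(t)}\le \sum_t\sqrt{n\sum_i q_i(t)}\le T\sqrt{n}.
\]
Combining this with the lower bound $c$ for each of the $n$ inputs gives $nc\le 2T\sqrt{n}$, that is, $T\ge c\sqrt{n}/2=\Omega(\sqrt{n})$.

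The main obstacle is the hybrid step: establishing the per-query bound $2\sqrt{q_i(t)}$ and tracking the telescoping carefully. The remaining steps are routine. Alternatively, one could invoke the adversary bound on the relation pairing $0^n$ with each $e_i$, or Paturi's approximate degree bound $\widetilde{\deg}(\mathsf{OR}_n)=\Omega(\sqrt{n})$ via the polynomial method. Both yield the same $\Omega(\sqrt{n})$ bound, and the hybrid argument is the most self-contained.
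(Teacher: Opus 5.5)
Your hybrid argument is correct and is essentially the original Bennett--Bernstein--Brassard--Vazirani proof; the paper gives no proof of its own and simply cites that result. In both arguments, the divergence caused by switching the oracle from $0^n$ to $e_i$ is bounded using the query magnitudes $q_i(t)$ of the all-zeros run, and one then averages over $i$. BBBV instead pick a single $i$ with $\sum_t q_i(t)\le T/n$ and apply Cauchy--Schwarz over $t$, which is equivalent to your summation over $i$.
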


\begin{defi}[$k$-Graph Collision]\label{def:kGC}
Let $G=(V,E)$ be a fixed undirected graph and let $1\leq k\leq |V|$. Define $\mathsf{GC}^G_k:\zone^V\to\zone$ by $\mathsf{GC}^G_k(x)=1\iff\exists S\subseteq V,\ |S|=k,\ \binom{S}{2}\subseteq E$  and $x_v=1$ for every $v\in S$.
When $k=2$, we simply refer to this as the \emph{Graph Collision} problem.
\end{defi}

\begin{defi}[Promise Majority]\label{def:prmaj}
Let $n$ be a positive even integer, and let $D=\{x\in\zone^n:\wt(x)\in\{n/2,n/2+1\}\}$. Define $\PrMaj_n:D\to\zone$ by $\PrMaj_n(x)=1\iff\wt(x)=\frac n2+1$.
\end{defi}

\subsection{Quantum query complexity}

Let $f:D\to Y$, where $D\subseteq\zone^n$ and $Y$ is a finite set. A quantum query algorithm for $f$ acts on a finite-dimensional Hilbert space $\mathcal H$ and alternates input-independent unitaries with applications of the query oracle
\[
    O_x\ket{i,b,z}
    =
    \ket{i,b\oplus x_i,z},
\]
where $i\in[n]$, $b\in\zone$, and $z$ denotes the workspace. Thus, after $T$ queries, the final state has the form
\[
    \ket{\psi_T}
    =
    U_T O_x U_{T-1} O_x \cdots U_1 O_x U_0\ket{\psi_0},
\]
where the unitaries $U_0,\ldots,U_T$ are independent of $x$. The algorithm then measures its output register in the computational basis and outputs the observed outcome.
The algorithm computes $f$ with bounded error if, for every input $x\in D$, it outputs $f(x)$ with probability at least $2/3$. The quantum query complexity $\Q(f)$ is the minimum number of queries used by any bounded-error quantum algorithm computing $f$. For a relation $R\subseteq D\times Y$, we similarly define $\Q(R)$ by requiring that, on every input $x\in D$, the algorithm outputs some $y\in Y$ such that $(x,y)\in R$ with probability at least $2/3$. For a more detailed introduction to the quantum query model, we refer the reader to \cite{Hamoudi_2025}.

\begin{theorem}[{\cite[Lemma~5.2]{Lee_Mittal_Reichardt_Spalek_Szegedy_2011}}]\label{thm:Q_composition}
Let $f:\zone^m\to\zone$ be a Boolean function and let $g:D\to\zone$, where $D\subseteq\zone^n$, be a partial Boolean function. Then $\Q(f\circ g)=\Omega\bigl(\Q(f)\Q(g)\bigr)$.
\end{theorem}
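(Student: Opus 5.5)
The plan is to prove the statement through the general (negative-weight) adversary bound $\mathrm{ADV}^{\pm}$. I would invoke two facts. The first is that $\mathrm{ADV}^{\pm}$ characterizes bounded-error quantum query complexity up to constant factors for partial functions, so that $\Q(h)=\Theta(\mathrm{ADV}^{\pm}(h))$ for every partial Boolean $h$. The second is that $\mathrm{ADV}^{\pm}$ is multiplicative from below under composition: $\mathrm{ADV}^{\pm}(f\circ g)\geq \mathrm{ADV}^{\pm}(f)\,\mathrm{ADV}^{\pm}(g)$. The first fact is the known one (Reichardt; Lee--Mittal--Reichardt--\v{S}palek--Szegedy). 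Given both, $\Q(f\circ g)=\Omega(\mathrm{ADV}^{\pm}(f)\,\mathrm{ADV}^{\pm}(g))=\Omega(\Q(f)\Q(g))$. The real work is therefore the composition inequality for partial $g$, which I would prove on the primal side, following H\o{}yer--Lee--\v{S}palek.

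\textbf{Step 1 (normalized matrices).} Take an optimal adversary matrix $\Gamma_f$ for $f$ with principal unit eigenvector $\delta_f$. Take an optimal adversary matrix $\Gamma_g$ for $g$ on its promise domain $D$, with $\|\Gamma_g\|=\mathrm{ADV}^{\pm}(g)\max_j\|\Gamma_g\circ D_j\|$, where $D_j[x,y]=[x_j\neq y_j]$. Normalize so that $\max_j\|\Gamma_g\circ D_j\|=1$. Split the rows and columns of $\Gamma_g$ by the value of $g$. Adversary matrices may be assumed to vanish on pairs with $g(x)=g(y)$, so $\Gamma_g$ consists of off-diagonal blocks $\Gamma_g^{(0,1)}$ and $\Gamma_g^{(1,0)}$. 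Replace the zero diagonal blocks by $\Gamma_g^{(a,a)}:=\|\Gamma_g\|\,\delta_g^{(a)}(\delta_g^{(a)})^{T}$, where $\delta_g^{(a)}$ is the restriction of the principal eigenvector of $\Gamma_g$ to $g^{-1}(a)$. Call the result $\Gamma_g'$. The diagonal blocks do not affect the value of $\Gamma_g'\circ D_j$ on pairs with $x_j=y_j$, so the filtered norms change only in a controlled way.

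\textbf{Step 2 (composed matrix).} For inputs $\mathbf{x}=(x^1,\dots,x^m)$ and $\mathbf{y}$ in $D^m$, write $\tilde x=(g(x^1),\dots,g(x^m))$ and define
\[
\Gamma[\mathbf{x},\mathbf{y}]=\Gamma_f[\tilde x,\tilde y]\prod_{i=1}^m \Gamma_g^{(g(x^i),g(y^i))}[x^i,y^i].
\]
This matrix is zero whenever $f(\tilde x)=f(\tilde y)$, so it is a legal adversary matrix for $f\circ g$. Using the candidate eigenvector $\delta[\mathbf{x}]=\delta_f[\tilde x]\prod_i \delta_g^{(g(x^i))}[x^i]$ (suitably normalized), show that $\|\Gamma\|\geq\|\Gamma_f\|\,\|\Gamma_g\|^m$. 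The reason is that $\Gamma$ is a block matrix whose block at $(\tilde x,\tilde y)$ equals $\Gamma_f[\tilde x,\tilde y]$ times a tensor product of rank-one-compatible blocks, all sharing the same left and right vectors up to the factor $\|\Gamma_g\|$.

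\textbf{Step 3 (filtered norms), the main obstacle.} For a query to bit $j$ of block $i$, one must show
\[
\|\Gamma\circ D_{(i,j)}\|\leq \|\Gamma_f\circ D_i\|\cdot\|\Gamma_g\|^{m-1}\cdot\|\Gamma_g'\circ D_j\|.
\]
This is where care is needed. The entrywise filter only touches the $i$-th factor. I would write $\Gamma\circ D_{(i,j)}$ as a sum over $(a,b)\in\{0,1\}^2$ of blocks that factor as tensor products, using the facts that $\|A\otimes B\|=\|A\|\|B\|$ and that the Hadamard product with a rank-one matrix does not increase norm. The troublesome terms are those with $a=b$ in block $i$: there $D_i$ kills the $\Gamma_f$ factor, which means I must verify that $\Gamma_g'^{(a,a)}\circ D_j$ contributes nothing harmful. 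I expect to handle this by noting that $\Gamma_f\circ D_i$ already vanishes when $\tilde x_i=\tilde y_i$. Once this is done, with $\max_j\|\Gamma_g'\circ D_j\|$ bounded by $O(1)$ times the normalization, dividing the bound of Step 2 by this bound yields $\mathrm{ADV}^{\pm}(f\circ g)\geq\mathrm{ADV}^{\pm}(f)\,\mathrm{ADV}^{\pm}(g)$. Nothing in the argument uses totality of $g$, since all matrices are indexed only by the promise domain.
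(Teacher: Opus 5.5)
Your top-level route, namely $\Q=\Theta(\mathrm{ADV}^{\pm})$ for partial functions plus the H\o{}yer--Lee--\v{S}palek inequality $\mathrm{ADV}^{\pm}(f\circ g)\ge\mathrm{ADV}^{\pm}(f)\,\mathrm{ADV}^{\pm}(g)$, is the standard argument behind the cited lemma; the paper cites it without proof. However, your choice of inner matrix breaks Step~3.

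\textbf{The gap.} With rank-one diagonal blocks $\Gamma_g^{(a,a)}=\|\Gamma_g\|\,\delta_g^{(a)}(\delta_g^{(a)})^{T}$, the filter $D_{(i,j)}$ does not remove the pairs with $g(x^i)=g(y^i)$. On those pairs, $\Gamma\circ D_{(i,j)}$ equals the \emph{unfiltered} entry $\Gamma_f[\tilde x,\tilde y]$ times $\bigotimes_{k\ne i}\Gamma_g^{(\tilde x_k,\tilde y_k)}\otimes\bigl(\Gamma_g^{(a,a)}\circ D_j\bigr)$. That entry is generally nonzero, because $\tilde x,\tilde y$ may differ in another coordinate. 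Your remark that $\Gamma_f\circ D_i$ vanishes when $\tilde x_i=\tilde y_i$ does not help: $\Gamma\circ D_{(i,j)}$ only reduces to $\Gamma_f\circ D_i$ if the inner diagonal blocks are themselves killed by $D_j$.

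These extra terms are not small. Splitting $\delta=\delta_g^{(a)}$ by the value of $x_j$, the matrix $\delta\delta^{T}\circ D_j=\delta_0\delta_1^{T}+\delta_1\delta_0^{T}$ has norm $\|\delta_0\|\|\delta_1\|$. This is $\Theta(1)$ whenever $\delta$ has constant mass on both values of $x_j$, as for the uniform eigenvectors of the standard adversary matrices of parity or $\PrMaj_n$. So $\|\Gamma_g'\circ D_j\|=\Theta(\|\Gamma_g\|)$, not $O(1)$.

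Concretely, take $f=\mathrm{XOR}_2$ with $\Gamma_f$ the $4$-cycle. The $(\tilde x,\tilde y)=(00,01)$ block of $\Gamma\circ D_{(1,j)}$ alone has norm $\Theta(\|\Gamma_g\|^2)$, while $\|\Gamma\|=O(\|\Gamma_g\|^2)$. The certified ratio is then $O(1)$ instead of $2\|\Gamma_g\|$. Separately, since $\|\delta_g^{(a)}\|=1/\sqrt2$, your diagonal blocks contribute $\|\Gamma_g\|/2$ rather than $\|\Gamma_g\|$ to the quadratic form in Step~2, so that step also needs renormalizing.

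\textbf{The fix.} Use the H\o{}yer--Lee--\v{S}palek choice $\Gamma_g^{(a,a)}:=\|\Gamma_g\|\cdot I$ on $g^{-1}(a)$. The identity is supported on $x=y$, so every $D_j$ annihilates it. Hence $\Gamma\circ D_{(i,j)}$ is supported on pairs with $g(x^i)\ne g(y^i)$, and its outer factor really is $\Gamma_f\circ D_i$.

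Step~2 survives if you build the test vector from the unit top singular vectors $u,v$ of the off-diagonal block $B$ of $\Gamma_g$. Writing $\lambda:=\|\Gamma_g\|$, this works because $u^{T}(\lambda I)u=v^{T}(\lambda I)v=u^{T}Bv=\lambda$.

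For the upper bounds you need more than ``$\|A\otimes B\|=\|A\|\|B\|$ and rank-one Hadamard products''. The composed matrix is not a tensor product, and Hadamard products with rank-one matrices can increase the norm. The argument that works goes through the SVD:
\begin{itemize}
\item In the bases given by an SVD $B=U\Sigma V^{T}$, the blocks $\lambda I$, $B$, $B^{T}$ all become generalized-diagonal.
\item Hence the matrix composed from any outer matrix $A$ is an orthogonal direct sum of principal submatrices of $A\circ\bigotimes_k\lambda\left(\begin{smallmatrix}1&c_k\\ c_k&1\end{smallmatrix}\right)$, with $c_k=\sigma_{t_k}/\lambda\in[0,1]$.
\item By Schur's theorem, a Hadamard product with a PSD matrix of unit diagonal does not increase the spectral norm. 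So each summand has norm at most $\lambda^m\|A\|$.
\item Apply this to $A=\Gamma_f\circ D_i$, which vanishes when $a_i=b_i$. The $i$-th coordinate then factors out as the off-diagonal block of $\Gamma_g\circ D_j$.
\end{itemize}
This gives $\|\Gamma\circ D_{(i,j)}\|\le\lambda^{m-1}\|\Gamma_f\circ D_i\|\,\|\Gamma_g\circ D_j\|$, and with Step~2 the composition inequality follows.
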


\begin{theorem}[{\cite[Theorem~1]{DH96}}]\label{thm:maximum_finding}
Let $S$ be a nonempty finite set, and suppose we have query access to a function $f:S\to\mathbb{R}$. There is a bounded-error quantum algorithm that outputs an element $s\in S$ maximizing $f(s)$ using $O(\sqrt{|S|})$
queries to $f$.
\end{theorem}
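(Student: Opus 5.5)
The statement immediately above is \Cref{thm:maximum_finding}, the D\"urr--H{\o}yer maximum-finding result. The plan is to prove it via a randomized threshold-lowering search built on Grover search with an unknown number of marked items (the BBHT variant of \cite{Bennett_Bernstein_Brassard_Vazirani_1997}-style amplitude amplification). Identify $S$ with $[N]$, $N=|S|$. We first pick an index $y\in S$ uniformly at random as the current threshold. We then repeat the following: run the exponential-search variant of Grover's algorithm to find some $s$ with $f(s)>f(y)$. Each marking predicate is evaluated with $O(1)$ queries to $f$, one for $f(s)$ and $f(y)$ computed classically once per stage. If such an $s$ is found, set $y\leftarrow s$. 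We stop after a total budget of $c\sqrt N$ queries and output $y$.

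The key analytic step is bounding the expected total query cost until $y$ becomes a maximizer. Rank the elements so that the element of rank $r$ has exactly $r-1$ elements strictly above it (ties handled by breaking them with the index, which keeps $f$-values distinct without extra queries). The BBHT search returns a uniformly random marked element. By the standard argument, the probability that the element of rank $r$ is ever selected as a threshold is exactly $1/r$: among the top $r$ elements, the first one to be ``hit'' is uniform. When the threshold has rank $r$, there are $r-1$ marked items, so the expected search cost is $O(\sqrt{N/(r-1)})$. Summing gives
\[
\sum_{r=2}^{N}\frac1r\cdot O\!\left(\sqrt{\tfrac{N}{r-1}}\right)=O(\sqrt N)\sum_{r\ge 2} r^{-3/2}=O(\sqrt N).
\]
The final stage, at rank $1$, has no marked elements and would never terminate, which is exactly why we cut off by a total budget.

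By Markov's inequality, if the budget is a large enough constant multiple of the expected cost $m_0\sqrt N$, say $c=3m_0$, then with probability at least $2/3$ the threshold reaches the maximum before the budget expires. Once that happens, $y$ never changes again, since no $s$ can have $f(s)>f(y)$. So the output is correct with probability at least $2/3$. The remaining check is that query access to $f$ can be coherently used inside Grover: querying $f$ on a superposition and uncomputing costs two queries per oracle call, a constant factor.

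The main obstacle is making the ``probability $1/r$'' claim rigorous, because BBHT's output distribution is uniform over marked items only conditioned on success, and it may fail. I would handle this by noting that a failed search leaves $y$ unchanged and simply restarts. Conditioned on eventual success, the transitions form exactly the random process ``jump to a uniform element above $y$''. I would then bound the expected cost of repeated attempts per stage by a geometric series, which keeps each stage's cost at $O(\sqrt{N/(r-1)})$.
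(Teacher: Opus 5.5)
Your proposal is correct and is essentially the original D\"urr--H{\o}yer argument, which the paper invokes only by citation and does not reprove: a random initial threshold, exponential Grover search returning a uniformly random marked element, the $1/r$ probability that rank $r$ is ever the threshold, summing $\frac{1}{r}\sqrt{N/(r-1)}$ to $O(\sqrt{N})$, and a Markov-inequality budget cutoff. Two small remarks: the exponential search you need is due to Boyer, Brassard, H{\o}yer and Tapp (the Bennett--Bernstein--Brassard--Vazirani paper is the $\Omega(\sqrt{n})$ lower bound), and that search is Las Vegas whenever at least one element is marked, so the ``failure'' you worry about in your last paragraph arises only from the global budget cutoff, which your Markov argument already handles.
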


\begin{fact}\label{fact:gc-lower-bound}
For all integers $1\leq k\leq n$, $\Q(\mathsf{GC}^{K_n}_k)=\Omega(\sqrt{n-k+1})$. In particular, for every constant $k\geq 1$, we have $\Q(\mathsf{GC}^{K_n}_k)=\Omega(\sqrt n)$.
\end{fact}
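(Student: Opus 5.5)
The plan is to reduce from $\OR_{n-k+1}$, which has quantum query complexity $\Omega(\sqrt{n-k+1})$ by \Cref{thm:or_lower_bound}. The reduction fixes $k-1$ vertices of $K_n$ as always marked and lets the remaining $n-k+1$ vertices carry the input bits.

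Concretely, let $m=n-k+1$ and $y\in\zone^m$ be an input to $\OR_m$. Identify $V(K_n)=[n]$, and define $x\in\zone^{[n]}$ by $x_v=1$ for $v\in\{m+1,\dots,n\}$ (the $k-1$ fixed vertices) and $x_v=y_v$ for $v\in[m]$. Since every $k$-subset of $V(K_n)$ is a clique, $\mathsf{GC}^{K_n}_k(x)=1$ if and only if at least $k$ coordinates of $x$ equal $1$. Because exactly $k-1$ coordinates are fixed to $1$, this happens if and only if $\wt(y)\ge 1$, i.e., iff $\OR_m(y)=1$. Each query to $x_v$ is answered either by a constant (no query needed, implementable by an input-independent unitary) or by a single query to $y_v$. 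Any algorithm for $\mathsf{GC}^{K_n}_k$ therefore yields an algorithm for $\OR_m$ with the same number of queries and the same success probability, so $\Q(\mathsf{GC}^{K_n}_k)\ge\Q(\OR_m)=\Omega(\sqrt{n-k+1})$.

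The edge case $k=n$ gives $m=1$, where the bound is $\Omega(1)$. This holds trivially, since the function is non-constant and so needs at least one query. For constant $k\ge1$, $n-k+1\ge n/2$ once $n\ge 2k$, and smaller $n$ only affects constants, which yields $\Omega(\sqrt n)$. The only point needing care is simulating the oracle for the fixed bits: the algorithm applies a controlled operation that flips $b$ when $i>m$ and otherwise routes to the $y$-oracle with index $i$. This costs at most one query to $y$ per query to $x$.
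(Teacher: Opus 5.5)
Your proposal is correct and is essentially the paper's proof: every $k$-subset of $K_n$ is a clique, so $\mathsf{GC}^{K_n}_k$ asks whether at least $k$ vertices are marked, and fixing $k-1$ vertices as marked restricts it to $\OR_{n-k+1}$. Your extra detail on the oracle simulation and on the edge cases ($k=n$, and small $n$ for constant $k$) is sound.
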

\begin{proof}
Since every $k$-vertex subset of $K_n$ is a clique, $\mathsf{GC}^{K_n}_k$ asks whether at least $k$ vertices are marked. Fixing $k-1$ vertices to be marked restricts this function to $\OR_{n-k+1}$, giving the claimed lower bound.
\end{proof}

\begin{theorem}[{\cite{Amb02}}]\label{thm:prmaj_lower_bound}
    For every positive even integer $n$, $\Q(\mathsf{PrMaj}_n) = \Omega(n)$.
\end{theorem}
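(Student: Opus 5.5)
Immediately above is \Cref{thm:prmaj_lower_bound}, an earlier result cited from Ambainis, which states $\Q(\PrMaj_n)=\Omega(n)$ for even $n$. I would prove it with Ambainis's (unweighted) adversary method. Let $X=\{x:\wt(x)=n/2\}$ be the $0$-inputs and $Y=\{y:\wt(y)=n/2+1\}$ the $1$-inputs. Define the relation $R\subseteq X\times Y$ by $(x,y)\in R$ if and only if $y$ is obtained from $x$ by flipping exactly one $0$ to a $1$.

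The adversary theorem (for partial functions) states the following. Suppose every $x\in X$ has at least $m$ partners in $R$, every $y\in Y$ has at least $m'$ partners, and for every $x$ and index $i$ at most $\ell$ partners $y$ satisfy $x_i\neq y_i$, with the analogous bound $\ell'$ for every $y$ and $i$. Then $\Q(f)=\Omega(\sqrt{mm'/(\ell\ell')})$.

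The parameters follow by direct counting.
\begin{itemize}
\item Each $x\in X$ has $n/2$ zeros, so $m=n/2$.
\item Each $y\in Y$ has $n/2+1$ ones, any of which can be flipped back to give a partner in $X$, so $m'=n/2+1$.
\item For fixed $x$ and $i$, the only partner differing at $i$ is the one obtained by flipping position $i$, and it exists only if $x_i=0$. Hence $\ell=1$, and symmetrically $\ell'=1$.
\end{itemize}
Plugging in gives $\Q(\PrMaj_n)=\Omega(\sqrt{(n/2)(n/2+1)})=\Omega(n)$.

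The only point needing care is that the adversary bound applies to partial functions on the promise domain $D$. This holds because the standard hybrid argument only compares the algorithm's states on pairs in $R$, and all such pairs lie in $D$. There is no real obstacle beyond invoking the theorem correctly.

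As an alternative, one could restrict to inputs whose first $n/2$ bits contain exactly $n/2$ minus $n/4$ ones, reducing to distinguishing weight $n/4$ from weight $n/4+1$ on $n/2$ bits. The polynomial method via Paturi's bound gives $\Omega(n)$ for this, and that argument would serve as a backup.
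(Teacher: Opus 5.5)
Your argument is correct and is essentially the paper's. The paper only cites Ambainis and remarks that his adversary proof for Majority carries over. That proof uses exactly your relation between weight-$n/2$ and weight-$(n/2+1)$ strings at Hamming distance one ($m=n/2$, $m'=n/2+1$, $\ell=\ell'=1$), which involves only promise inputs, so it applies verbatim to $\PrMaj_n$.

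Your backup restriction is unnecessary, and it needs $4\mid n$. Paturi's bound applies directly to the symmetrized acceptance polynomial, which is bounded on all of $\{0,\dots,n\}$.
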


Ambainis~\cite{Amb02} actually showed an $\Omega(n)$ lower bound for the $n$-bit Majority function. It is not hard to see that the same proof also shows the same lower bound for for $\mathsf{PrMaj}_n$.

\subsection{Quantum walk on Johnson graph}
In this work, we will be making use of the MNRS-framework for quantum walks, developed by Magniez, Nayak, Roland and Santha~\cite{Magniez_Nayak_Roland_Santha_2011} as an algorithmic tool. We recall their result here.

\begin{theorem}[Quantum walk search {\cite[Theorem~3]{Magniez_Nayak_Roland_Santha_2011}}]
\label{thm:mnrs}
Let $P$ be a reversible, ergodic Markov chain on a finite state space $X$, with stationary distribution $\pi$ and spectral gap $\delta>0$. Let $M\subseteq X$ be the set of marked states, and
suppose that a known parameter $\varepsilon>0$ satisfies
\[
    M\neq\varnothing
    \quad\Longrightarrow\quad
    \pi(M)=\sum_{x\in M}\pi(x)\ge\varepsilon.
\]
Associate each state $x\in X$ with a data structure $D(x)$. Define the following costs, measured consistently in oracle queries (or elementary operations):
\begin{itemize}
    \item \emph{Setup cost $\bm{S}$:} the cost of preparing $        \sum_{x\in X}\sqrt{\pi(x)}\,|x\rangle|D(x)\rangle$.

    \item \emph{Update cost $\bm{U}$:}
Write $|x\rangle_D:=|x\rangle|D(x)\rangle$, and let
\[
    P^*(y,x):=\frac{\pi(x)P(x,y)}{\pi(y)}
\]
be the transition matrix of the time-reversed chain.
Let $\bm{U}$ bound the cost of implementing each of
$\mathcal{U}_P$, $\mathcal{U}_{P^*}$, and their inverses, where
\[
    \mathcal{U}_P:
    |x\rangle_D|\overline{0}\rangle
    \longmapsto
    |x\rangle_D
    \sum_{y\in X}\sqrt{P(x,y)}\,|y\rangle_D, \qquad 
    \mathcal{U}_{P^*}:
    |\overline{0}\rangle|y\rangle_D
    \longmapsto
    \sum_{x\in X}\sqrt{P^*(y,x)}\,|x\rangle_D|y\rangle_D.
\]
Here $|\overline{0}\rangle$ denotes a fixed blank state
of the state-and-data register. For a reversible chain,
$P^*=P$.

    \item \emph{Checking cost $\bm{C}$:} the cost of implementing
    the reflection $|x\rangle|D(x)\rangle
        \longmapsto
        (-1)^{\mathbb{I}[x \in M]}|x\rangle|D(x)\rangle$,
    with auxiliary workspace returned to its initial state.
\end{itemize}
There is a quantum algorithm that, with probability at least $2/3$, finds a marked state if one exists, or reports that $M=\varnothing$, at cost
\[
    O\!\left(
        \bm{S}+\frac{1}{\sqrt{\varepsilon}}
        \left(\frac{1}{\sqrt{\delta}}\bm{U}+\bm{C}\right)
    \right).
\]
\end{theorem}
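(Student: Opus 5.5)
The plan follows the original Magniez--Nayak--Roland--Santha argument: implement Grover-style amplitude amplification where the reflection about the initial state is replaced by an \emph{approximate} reflection built from phase estimation on a Szegedy walk operator. Let $\ket{\pi}:=\sum_{x}\sqrt{\pi(x)}\ket{x}_D\sum_y\sqrt{P(x,y)}\ket{y}_D$ be the two-register stationary state. The setup procedure plus one application of $\mathcal U_P$ prepares $\ket{\pi}$ at cost $\bm S+\bm U$. Write $\ket{\pi_M}$ for its normalized projection onto $\{x\in M\}$ in the first register. Then $\langle \pi_M|\pi\rangle=\sqrt{\pi(M)}\ge\sqrt{\varepsilon}$ whenever $M\neq\varnothing$.

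\textbf{Walk operator and spectral gap.} Define the subspaces $\mathcal A=\mathrm{span}\{\ket{x}_D\sum_y\sqrt{P(x,y)}\ket{y}_D\}$ and $\mathcal B=\mathrm{span}\{\sum_x\sqrt{P^*(y,x)}\ket{x}_D\ket{y}_D\}$. Let $W(P)=\mathrm{ref}_{\mathcal B}\,\mathrm{ref}_{\mathcal A}$. Each reflection is obtained by conjugating a reflection about $\ket{\overline 0}$ with $\mathcal U_P$ or $\mathcal U_{P^*}$, so $W(P)$ costs $O(\bm U)$. By Szegedy's spectral lemma (Jordan's lemma on the pair of projectors), on $\mathcal A+\mathcal B$ the eigenvalues of $W(P)$ are $e^{\pm 2i\theta_j}$, where $\cos\theta_j$ ranges over the singular values of the discriminant matrix $\sqrt{P\circ P^{*T}}$. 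Reversibility makes this matrix symmetric and similar to $P$. Hence $\ket{\pi}$ is the unique $1$-eigenvector in $\mathcal A+\mathcal B$, and every other eigenphase satisfies $2\theta_j\ge 2\sqrt{2\delta}$ roughly, using $\cos\theta=1-\delta\Rightarrow\theta\ge\sqrt{2\delta}$. I expect this spectral correspondence to be the main technical step. It needs care because $P^*\neq P$ in general data-structure terms, even though reversibility is assumed.

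\textbf{Approximate reflection.} Run phase estimation on $W(P)$ with precision $O(\sqrt\delta)$, i.e.\ $O(1/\sqrt\delta)$ applications. Flip the sign unless the estimated phase is $0$, then uncompute. This reflects about $\ket{\pi}$ but errs on the orthogonal part. Repeating with $s=O(\log(1/\varepsilon))$ ancilla copies and taking the majority/AND makes the error on states orthogonal to $\ket\pi$ at most $2^{-s+1}$. The cost is $O(\log(1/\varepsilon)\,\bm U/\sqrt\delta)$. (The clean $O(\bm U/\sqrt\delta)$ without the log factor follows the MNRS recursive amplitude amplification trick; I would first prove the version with the log factor and then remove it by their refined analysis.)

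\textbf{Search.} Starting from $\ket{\pi}$, alternate the checking reflection (cost $\bm C$, extended to two registers) with the approximate reflection about $\ket\pi$. The ideal iteration rotates by angle $\approx 2\arcsin\sqrt{\pi(M)}$ in the plane of $\ket{\pi},\ket{\pi_M}$. Errors accumulate additively over $O(1/\sqrt\varepsilon)$ iterations, which is why precision $2^{-s}=O(\sqrt\varepsilon)$ suffices. Since only the lower bound $\varepsilon$ on $\pi(M)$ is known, use the standard randomized number of iterations (Boyer--Brassard--H\o yer--Tapp) to get constant success probability, then measure and verify with one checking call. If $M=\varnothing$, the checking reflection is the identity, no marked state is ever verified, and we report $M=\varnothing$. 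The total cost is $O\big(\bm S+\varepsilon^{-1/2}(\delta^{-1/2}\bm U+\bm C)\big)$ as claimed.
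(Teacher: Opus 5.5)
The paper gives no proof of this statement. It is quoted from Magniez--Nayak--Roland--Santha and used as a black box in \Cref{lem:maxcut-threshold}, so your outline is in effect a reconstruction of MNRS. The architecture is the right one: a Szegedy walk, an approximate reflection built from phase estimation, amplitude amplification with a randomized number of iterations, and a final verification. Its steps are sound. However, what you actually argue yields only
\[
O\!\left(\bm{S}+\frac{1}{\sqrt{\varepsilon}}\left(\frac{\log(1/\varepsilon)}{\sqrt{\delta}}\,\bm{U}+\bm{C}\right)\right).
\]
Your final sentence asserting the log-free bound does not follow. Removing the $\log(1/\varepsilon)$ is precisely the step you defer to ``their refined analysis,'' and it is the only part of the statement that goes beyond the easy argument. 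It also matters downstream. In \Cref{lem:maxcut-threshold}, for constant $k$ and the chosen $r$, $\log(1/\varepsilon)=\log\bigl(\binom{n}{k}/\binom{r}{k}\bigr)=\Theta(\log n)$. Your version would therefore leave an extra polylogarithmic factor in the $O(n^{2-1/(k+1)})$ bound, even after re-optimizing $r$.

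The deferred step is not a routine tightening of your accounting. In your additive (hybrid) error bound, the $j$-th approximate reflection contributes $2^{1-k_j}$ times the norm of the component orthogonal to $\ket{\pi}$ of the ideal state it acts on. That norm is $|\sin(2j\theta)|$ with $\sin\theta=\sqrt{\pi(M)}$, which is $\Omega(1)$ for a constant fraction of the $\Theta(1/\sqrt{\varepsilon})$ iterations. By convexity, every allocation of precisions keeping the total error below a small constant then has $\sum_j k_j=\Omega(\varepsilon^{-1/2}\log(1/\varepsilon))$. The ideal recursive amplitude-amplification circuit unrolls to the same alternating product of reflections, so recursion alone does not help under this accounting.

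What does help is an error analysis that exploits the fact that the recursion inverts the \emph{approximate} sub-circuits exactly. Let $A_0=I$, $A_{i+1}=-A_iRA_i^{\dagger}\,\mathrm{ref}_M\,A_i$ with $\mathrm{ref}_M=I-2\Pi_M$, and $\ket{\psi}=A_i\ket{\pi}$. Then $A_{i+1}\ket{\pi}=-\ket{\psi}+2A_iRA_i^{\dagger}\Pi_M\ket{\psi}$. So the level-$i$ reflection deviates from the ideal one only on a vector of norm $2\|\Pi_M\ket{\psi}\|$, which is proportional to the current success amplitude. With relative-error bookkeeping across the $L=\Theta(\log(1/\varepsilon))$ levels (in the style of H{\o}yer--Mosca--de Wolf), constant precision suffices at the innermost levels, which are invoked most often. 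Letting the precision grow linearly with the level keeps the total at $\sum_i 3^{L-1-i}O(i+1)=O(3^L)=O(1/\sqrt{\varepsilon})$ walk steps' worth of reflections. That argument has to be supplied for the proof to be complete.

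A minor point: your phase-gap bound uses $\cos\theta_j=|\lambda_j|$, so $\delta$ must be the absolute gap $1-\max_{j\geq 2}|\lambda_j|$. If $\delta$ means $1-\lambda_2$, first pass to the lazy chain $(I+P)/2$, which costs only a constant factor.
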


For integers $n,r \in \mathbb{N}$,  the Johnson graph $J(n,r)$ is defined as follows.
\begin{defi}[Johnson graph $J(n,r)$]
For integers $1\le r<n$, the Johnson graph $J(n,r)$ is the undirected $r(n-r)$-regular graph whose vertices are the $r$-element subsets of $[n]$. Two vertices $R$ and $T$ are adjacent if and only if $|R\cap T|=r-1$, equivalently, if $T$ can be obtained from $R$ by replacing exactly one element.    
\end{defi}

Note that, for $n\ge 3$ and $1\le r<n$, the Johnson graph $J(n,r)$ is
connected, non-bipartite, and regular. Hence its simple random
walk is ergodic and reversible with respect to the uniform
stationary distribution
\cite[Sections~1.3, 1.4, and~1.6]{LevinPeres2017}. More formally,

\begin{theorem}[Ergodicity and reversibility of the walk over Johnson graph $J(n,r)$]
\label{thm:ergodicityJohnsonGraph}
Let $n\ge 3$ and $1\le r<n$. The Markov chain on
$X=\binom{[n]}{r}$ with transition probabilities
\[
    P(R,T)=
    \begin{cases}
        \dfrac{1}{r(n-r)}, & \text{if }|R\cap T|=r-1,\\[2mm]
        0, & \text{otherwise},
    \end{cases}
\]
is irreducible, aperiodic, and reversible with respect to
the uniform stationary distribution $\pi(R)=1/\binom{n}{r}$. In particular, it is ergodic.
\end{theorem}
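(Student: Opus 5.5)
The plan is to verify the three properties directly from the definition of $P$, using only elementary facts about finite Markov chains. First I will check that $P$ is a valid transition matrix: each $R\in\binom{[n]}{r}$ has exactly $r(n-r)$ neighbours in $J(n,r)$ (choose $a\in R$ to remove and $b\notin R$ to add). So each row of $P$ sums to $1$.

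For \emph{reversibility}, the adjacency relation $|R\cap T|=r-1$ is symmetric and the nonzero entries of $P$ are all the constant $1/(r(n-r))$. Hence $P(R,T)=P(T,R)$. With $\pi$ uniform, detailed balance $\pi(R)P(R,T)=\pi(T)P(T,R)$ holds trivially. Summing detailed balance over $R$ shows that $\pi$ is stationary.

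For \emph{irreducibility}, given $R,T\in\binom{[n]}{r}$, let $d=|R\setminus T|=|T\setminus R|$. Pair up the elements $a_1,\dots,a_d$ of $R\setminus T$ with the elements $b_1,\dots,b_d$ of $T\setminus R$. Define $R_i=(R\setminus\{a_1,\dots,a_i\})\cup\{b_1,\dots,b_i\}$. Each $R_i$ is an $r$-set, consecutive sets are adjacent, and $R_d=T$. This gives a path of positive probability from $R$ to $T$.

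For \emph{aperiodicity}, note that $P(R,R)=0$, so I cannot use a self-loop. Instead I will exhibit closed walks of lengths $2$ and $3$ through every state; the period then divides $\gcd(2,3)=1$. Length $2$ comes from stepping to any neighbour and back. For length $3$, since $n\ge 3$ and $1\le r<n$, at least one of $r\ge 2$ or $n-r\ge 2$ holds.
\begin{itemize}
\item If $n-r\ge 2$, pick $a\in R$ and distinct $b,c\notin R$. Then $R$, $R-a+b$, $R-a+c$ are pairwise adjacent, since any two of them intersect in a set of size $r-1$.
\item If $r\ge 2$, pick distinct $a,a'\in R$ and $b\notin R$. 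Then $R$, $R-a+b$, $R-a'+b$ form a triangle, since $R-a+b$ and $R-a'+b$ share $R\setminus\{a,a'\}\cup\{b\}$, which has size $r-1$.
\end{itemize}
The state space is finite, and the chain is irreducible and aperiodic, so it is ergodic.

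The only step needing care is aperiodicity: it requires a genuine odd cycle, and that is exactly where the hypotheses $n\ge 3$ and $1\le r<n$ are used in the case split above.
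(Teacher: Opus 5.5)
Your proposal is correct and follows the same route as the paper. The paper observes that $J(n,r)$ is connected, non-bipartite, and regular, and cites Levin--Peres to conclude that its simple random walk is ergodic and reversible with respect to the uniform distribution; you verify each of these properties directly (the exchange path for connectivity, the triangle case split for non-bipartiteness, which is exactly where $n\ge 3$ is used, and symmetry of $P$ for reversibility), so yours is a self-contained version of the same argument.
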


\section{Computing maximum out-degree in a tournament}\label{sec:mod}

In this section we prove \Cref{thm:mod}, restated below, which is our tight bound for the quantum query complexity of finding a maximum-out-degree vertex in a tournament.

\thmmod*

We first prove the following lemma, which gives an almost query-preserving reduction from $\OR_a\circ\PrMaj_a$ to $\dmod_{3a+1}$.

\begin{lemma}\label{lem:ormaj_to_maxoutdeg}
For every positive even integer $a$,
\[
    \Q(\dmod_{3a+1})
    \geq
    \Q(\OR_a\circ\PrMaj_a)-3a.
\]
\end{lemma}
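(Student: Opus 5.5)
The plan is to turn any bounded-error algorithm $\mathcal A$ for $\dmod_{3a+1}$ that makes $T$ queries into an algorithm for $\OR_a\circ\PrMaj_a$ that makes at most $T+3a$ queries. The bridge is the tournament $T_x$ of \Cref{fig:mod-new} with $m=a$.

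\textbf{Building the tournament.} The vertex set is $L\sqcup R$, where $L=\{u_1,\bar u_1,\ldots,u_a,\bar u_a,p\}$ and $R=\{v_1,\ldots,v_a\}$. Inside $L$ we place a fixed regular tournament on $2a+1$ vertices, for example the cyclic one in which each vertex beats the next $a$ vertices in cyclic order. Then every vertex of $L$ has out-degree exactly $a$ inside $L$.

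Inside $R\cup\{p\}$ every $v_j$ must have out-degree exactly $a/2$. Since $|R\cup\{p\}|=a+1$ is odd, I will take a regular tournament on $R\cup\{p\}$; each of its vertices then has out-degree $a/2$. The orientations of the pairs $\{p,v_j\}$ are read off from this tournament. Note that $p$ has out-degree $a$ inside $L$ and out-degree $a/2$ toward $R$, so $\odeg(p)=3a/2$.

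The remaining pairs are $u_i v_j$ and $\bar u_i v_j$, and these depend on the input. If $x^{(i)}_j=1$ we orient $u_i\to v_j\to\bar u_i$; if $x^{(i)}_j=0$ we orient $\bar u_i\to v_j\to u_i$.

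\textbf{Out-degree bookkeeping.} From the construction, $\odeg(u_i)=a+\wt(x^{(i)})$ and $\odeg(\bar u_i)=2a-\wt(x^{(i)})$. Under the promise $\wt(x^{(i)})\in\{a/2,a/2+1\}$, these are at most $3a/2+1$ and $3a/2$ respectively, and $\odeg(u_i)=3a/2+1$ exactly when $\PrMaj_a(x^{(i)})=1$.

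Each $v_j$ has out-degree $a/2$ inside $R\cup\{p\}$. Toward each pair $\{u_i,\bar u_i\}$ it beats exactly one of the two, so it gains $a$ more, giving $\odeg(v_j)=3a/2$.

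Putting these together, $\Delta^+(T_x)=3a/2+b$, where $b=(\OR_a\circ\PrMaj_a)(x)$. When $b=0$ the maximum $3a/2$ is attained (e.g. by $p$); when $b=1$ every vertex attaining $3a/2+1$ is some $u_i$ with $\PrMaj_a(x^{(i)})=1$.

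\textbf{Simulating queries.} A query to a fixed pair is answered with no query to $x$. A query to $\{u_i,v_j\}$ or $\{\bar u_i,v_j\}$ is answered with a single query to $x^{(i)}_j$, followed by a fixed relabelling that depends only on the pair and the vertex numbering convention. Both cases are implemented reversibly: query $x^{(i)}_j$ into an ancilla, XOR the appropriate Boolean function of it into the answer register, then uncompute with a second query. Taken literally this costs two queries per simulated query. However, the bit we need is either $x^{(i)}_j$ or its negation, so we can instead apply the oracle $O_x$ directly to the answer register and follow it with a NOT gate when required. That is one query per simulated query.

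\textbf{Finishing.} Run $\mathcal A$ on $T_x$ to obtain a vertex $w$ which, with probability at least $2/3$, maximizes out-degree. Then compute $\odeg(w)$ exactly with at most $3a$ classical queries (there are $3a$ other vertices), and output $1$ iff $\odeg(w)=3a/2+1$. This succeeds whenever $\mathcal A$ does, for a total of $T+3a$ queries, which gives the claimed bound.

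The step I expect to need the most care is the fixed part of the construction: checking that the regular tournaments exist with the right parities ($2a+1$ and $a+1$ both odd since $a$ is even), and that $p$ and the $v_j$ never exceed $3a/2$. The rest is bookkeeping.
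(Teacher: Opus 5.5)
Your proposal is correct and follows essentially the same route as the paper: the same tournament on $L\sqcup R$, the same bookkeeping giving $\Delta^+=3a/2+(\OR_a\circ\PrMaj_a)(x)$, a one-query-per-edge simulation, and $3a$ extra queries to check the out-degree of the returned vertex. The only cosmetic difference is that you take a regular tournament on $R\cup\{p\}$ directly, while the paper builds one on $R\cup\{q\}$ with a dummy vertex $q$, deletes $q$, and then orients the $p$--$R$ edges to compensate, which produces the same structure.
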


The theorem immediately follows as we note below.

\begin{proof}[Proof of \Cref{thm:mod}]
For $n=3a+1$, where $a$ is even, we have
\[
\Q(\mathsf{MOD}_{3a+1})
\geq
\Q\!\left(\mathsf{OR}_a\circ\mathsf{PrMaj}_a\right)-3a
=
\Omega\!\left(
\Q(\mathsf{OR}_a)\Q(\mathsf{PrMaj}_a)
\right)-3a
=
\Omega(a^{3/2})
=
\Omega(n^{3/2}).
\]
Here the inequality follows from \Cref{lem:ormaj_to_maxoutdeg}, the first equality follows from \Cref{thm:Q_composition}, and the second equality follows from \Cref{thm:or_lower_bound,thm:prmaj_lower_bound}.

For arbitrary $n$, let $n'=3a+1\leq n$ be the largest such integer with $a$ even, and pad an $n'$-vertex instance by adding $n-n'$ new vertices and orienting every edge from an old vertex towards a new vertex. Since $n'=\Theta(n)$, this gives the same $\Omega(n^{3/2})$ lower bound.

For the matching upper bound, the out-degree of any specified vertex can be computed exactly using $n-1$ queries. Applying quantum maximum finding \Cref{thm:maximum_finding} over the $n$ vertices, with the out-degree-computation procedure as the evaluation subroutine, therefore finds a maximum-out-degree vertex using $O(\sqrt{n}\cdot n)=O(n^{3/2})$ queries.\qedhere
\end{proof}

We now prove \Cref{lem:ormaj_to_maxoutdeg}.
\begin{proof}[Proof of \Cref{lem:ormaj_to_maxoutdeg}]

We prove the lemma by explicitly stating the reduction sketched in \Cref{fig:mod-new}. Let $x = x^{(1)},\ldots,x^{(a)}$ be an input to
$\mathsf{OR}_a\circ\mathsf{PrMaj}_a$, where each $x^{(i)}\in\zone^a$ and
\[
\wt(x^{(i)})\in\left\{\frac a2,\frac a2+1\right\}
\qquad\text{for every }i\in[a].
\]
We construct a tournament $G_x$ on vertex set
\[
L\sqcup R
=
\{u_1,\bar u_1,\ldots,u_a,\bar u_a,p\}
\sqcup
\{v_1,\ldots,v_a\}.
\]

We first fix a regular tournament on $L$. Thus, every vertex in $L$ has out-degree exactly $a$ within $L$.
Next, consider a regular tournament on $R\cup\{q\}$, where $q$ is a dummy vertex, and delete $q$ and its incident edges. At this point $a/2$ vertices in $R$ have out-degree $a/2-1$ within $R$, and $a/2$ vertices in $R$ have out-degree $a/2$ within $R$. For every $v_j\in R$, orient the edge between $p$ and $v_j$ so that $v_j$ has out-degree exactly $a/2$ within $R\cup\{p\}$. It follows that $p$ beats exactly $a/2$ vertices of $R$, and hence
\[
\deg^+_{G_x}(p)
=
a+\frac a2
=
\frac{3a}{2}.
\]
It remains to orient the edges between $R$ and the pairs $\{u_i,\bar u_i\}$. For every $i,j\in[a]$, we orient
\[
\begin{cases}
u_i\to v_j\to\bar u_i, & \text{if } x^{(i)}_j=1,\\
\bar u_i\to v_j\to u_i, & \text{if } x^{(i)}_j=0.
\end{cases}
\]
In particular, every $v_j$ beats exactly one vertex from each pair $\{u_i,\bar u_i\}$ regardless of the input. Therefore,
\[
\deg^+_{G_x}(v_j)
=
a+\frac a2
=
\frac{3a}{2}
\qquad\text{for every }j\in[a].
\]

For every $i\in[a]$, we have
\[
\deg^+_{G_x}(u_i)
=
a+\wt(x^{(i)}),
\qquad
\deg^+_{G_x}(\bar u_i)
=
a+(a-\wt(x^{(i)})).
\]
Hence, if $\wt(x^{(i)})=a/2$, both $u_i$ and $\bar u_i$ have out-degree $3a/2$, whereas if $\wt(x^{(i)})=a/2+1$, then $u_i$ and $\bar u_i$ have out-degrees $3a/2+1$ and $3a/2-1$, respectively.

Consequently,
\[
\Delta^+(G_x)
=
\begin{cases}
\frac{3a}{2}
& \text{if } \mathsf{PrMaj}_a(x^{(i)})=0
  \text{ for every }i\in[a],\\[2mm]
\frac{3a}{2}+1
& \text{otherwise}.
\end{cases}
\]
Thus,
\begin{equation}\label{eq:delta-out-orprmaj}
\Delta^+(G_x)=\frac{3a}{2}+1
\iff
\mathsf{OR}_a\!\left(
\mathsf{PrMaj}_a(x^{(1)}),\ldots,
\mathsf{PrMaj}_a(x^{(a)})
\right)=1.
\end{equation}

Given an input $x=(x^{(1)},\ldots,x^{(a)})$ to
$\mathsf{OR}_a\circ\mathsf{PrMaj}_a$, we can simulate queries to the tournament $G_x$ as follows. All fixed edge orientations can be answered without querying $x$. A query to the orientation of an edge between $v_j$ and either $u_i$ or $\bar u_i$ can be answered using a single query to $x^{(i)}_j$.

Hence, any $T$-query quantum algorithm for $\mathsf{MOD}_{3a+1}$, when run on $G_x$, can be simulated using $T$ queries to $x$. Let $v$ be the maximum-out-degree vertex output by this algorithm. We then compute $\deg^+_{G_x}(v)$ using at most $3a$ additional queries. We output $1$ if $\deg^+_{G_x}(v)=3a/2+1$, and output $0$ otherwise. By \Cref{eq:delta-out-orprmaj}, this gives a bounded-error quantum algorithm for
$\mathsf{OR}_a\circ\mathsf{PrMaj}_a$
using at most $T+3a$ queries.

Therefore,
\[
    \Q\!\left(\mathsf{OR}_a\circ\mathsf{PrMaj}_a\right) \leq \Q(\mathsf{MOD}_{3a+1})+3a.\qedhere
\]
\end{proof}

\section{Computing the maximum degree of an undirected graph}\label{sec:md}

In this section we prove \Cref{thm:max-degree}, restated below, which is our tight bound for the quantum query complexity of finding a maximum-degree vertex in an undirected graph.

\thmmaxdegree*

The lower bound is implicit in the proof of \cite[Theorem~37]{Apers_Lee_21}, whose hard instances can be transformed into unweighted maximum-degree instances by a simple query-preserving reduction. We give a direct proof here for completeness. The proof strategy is the same as in Section~3, but the reduction is simpler. We first prove the following almost query-preserving reduction from $\OR_{n/2}\circ\PrMaj_{n/2}$ to $\umod_n$.
\begin{lemma}\label{lem:ormaj_to_maxdeg}
For every positive integer $n$ divisible by 4,
\[
\Q(\mathsf{MD}_n)
\geq 
\Q\!\left(\mathsf{OR}_{n/2}\circ\PrMaj_{n/2}\right) - n + 1.
\]
\end{lemma}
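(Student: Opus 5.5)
Set $a=n/2$; since $4\mid n$, $a$ is a positive even integer, so $\PrMaj_a$ is well defined. Following the construction in \Cref{fig:max-degree-new}, we map an input $x=(x^{(1)},\ldots,x^{(a)})$ of $\OR_a\circ\PrMaj_a$ to a graph $G_x$ on $L\sqcup R$, where $L=\{u_1,\ldots,u_a\}$ and $R=\{v_1,\ldots,v_a\}$, so $G_x$ has exactly $n$ vertices. The set $L$ induces a fixed clique and $R$ a fixed independent set. For every $i,j\in[a]$, the pair $\{u_i,v_j\}$ is an edge if and only if $x^{(i)}_j=1$.

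First, we compute degrees. For $u_i$ we get $\deg_{G_x}(u_i)=(a-1)+\wt(x^{(i)})$, which equals $3a/2-1$ or $3a/2$ according to whether $\PrMaj_a(x^{(i)})$ is $0$ or $1$. For $v_j$ we only have $\deg_{G_x}(v_j)\le a\le 3a/2-1$, using $a\ge 2$. Hence
\[
\Delta(G_x)=\tfrac{3a}{2}-1+(\OR_a\circ\PrMaj_a)(x).
\]
Because $\Delta(G_x)\ge 3a/2-1\ge a$, the maximum is always attained at some $u_i$. Ties with a $v_j$ can only occur when $a=2$, and then the degree value still determines the answer. In every case, the degree of any vertex achieving the maximum equals $3a/2-1+b$, where $b$ is the value of $\OR_a\circ\PrMaj_a$.

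Second, we simulate queries. Pairs inside $L$ or inside $R$ are answered without querying $x$. A pair $\{u_i,v_j\}$ is answered with one query to $x^{(i)}_j$. Running a $T$-query algorithm for $\umod_n$ on $G_x$ therefore uses $T$ queries to $x$. With probability at least $2/3$ it outputs a maximum-degree vertex $w$.

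Third, we read off the answer. We compute $\deg_{G_x}(w)$ by querying the $n-1$ pairs incident to $w$, which costs at most $n-1$ queries to $x$; in fact only the $a$ cross pairs cost anything. We output $1$ if and only if $\deg_{G_x}(w)=3a/2$. This gives a bounded-error algorithm for $\OR_a\circ\PrMaj_a$ using $T+n-1$ queries, so $\Q(\OR_{n/2}\circ\PrMaj_{n/2})\le\Q(\umod_n)+n-1$, which rearranges to the claimed inequality. The only step needing care is the first: verifying that vertices in $R$ can never exceed the degree of the $u_i$'s, so that the reported maximum degree encodes $b$. The inequality $a\le 3a/2-1$ handles this.
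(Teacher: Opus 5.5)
Your proposal is correct and matches the paper's proof: you use the same graph $G_x$ (clique on $L$, independent set on $R$, cross edge $\{u_i,v_j\}$ present iff $x^{(i)}_j=1$), the same one-query-per-edge simulation, and the same final $(n-1)$-query degree computation at the output vertex. Your explicit check that $\deg_{G_x}(v_j)\le a\le 3a/2-1$, including the tie case $a=2$, spells out a step the paper leaves implicit, and you handle it correctly.
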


The theorem immediately follows as we note below.

\begin{proof}[Proof of Theorem~\ref{thm:max-degree}]
First suppose that $n$ is divisible by 4.
We have
\[
\Q(\mathsf{MD}_n)
\geq
\Q\!\left(\mathsf{OR}_{n/2}\circ\mathsf{PrMaj}_{n/2}\right) - n + 1
=
\Omega\!\left(
\Q(\mathsf{OR}_{n/2})\Q(\mathsf{PrMaj}_{n/2})
\right) - n + 1
=
\Omega(n^{3/2}).
\]
Here the inequality follows from \Cref{lem:ormaj_to_maxdeg}, the first equality follows from \Cref{thm:Q_composition}, and the last equality follows from \Cref{thm:or_lower_bound,thm:prmaj_lower_bound}.

It remains to prove the lower bound for an arbitrary positive integer $n$. For arbitrary $n\geq 4$, let $n'=4\lfloor n/4\rfloor$. Pad an $n'$-vertex instance with $n-n'$ isolated vertices. Any maximum-degree vertex of the padded graph that belongs to the original graph remains a valid output for the original instance. If the algorithm outputs an added vertex, then, whenever its output is correct, the original graph is edgeless, so we may instead output any original vertex. Thus the lower bound extends to arbitrary $n$, since $n'=\Theta(n)$.

For the matching upper bound, the degree of any specified vertex can be computed exactly using $n-1$ queries. Applying quantum maximum finding \Cref{thm:maximum_finding} over the $n$ vertices, with the degree-computation procedure as the evaluation subroutine, therefore finds a maximum-degree vertex using $O(\sqrt{n}\cdot n)=O(n^{3/2})$ queries.\qedhere
\end{proof}

We now prove \Cref{lem:ormaj_to_maxdeg}.
\begin{proof}[Proof of \Cref{lem:ormaj_to_maxdeg}]

We prove the lemma by explicitly stating the reduction sketched in \Cref{fig:max-degree-new}. Let $x^{(1)},\ldots,x^{(n/2)}$ be an input to
$\mathsf{OR}_{n/2}\circ\mathsf{PrMaj}_{n/2}$, where each $x^{(i)} \in \zone^{n/2}$ and
\[
\wt(x^{(i)})\in\left\{\frac n4,\frac n4+1\right\}
\qquad\text{for every }i\in[n/2].
\]
We construct a graph $G_x$ on vertex set
\[
    L\sqcup R = \{u_1,\ldots,u_{n/2}\} \sqcup \{v_1,\ldots,v_{n/2}\}.
\]
Add edges to $G_x$ such that $L$ becomes a clique and $R$ remains an independent set. For every
$i,j\in[n/2]$, we include the edge $\{u_i,v_j\}$ if and only if
$x^{(i)}_j=1$.

Thus, for every $i\in[n/2]$, $\deg_{G_x}(u_i) = \frac n2-1+\wt(x^{(i)}) \in \left\{\frac{3n}{4}-1,\frac{3n}{4}\right\}$.
On the other hand, every vertex in $R$ has degree at most $n/2$.
Consequently, the maximum degree of a vertex in $G_x$ is
\[
    \Delta(G_x) = \begin{cases}
        \frac{3n}{4}-1
        & \text{if } \mathsf{PrMaj}_{n/2}(x^{(i)})=0
          \text{ for every } i\in[n/2],\\[2mm]
        \frac{3n}{4}
        & \text{otherwise}.
    \end{cases}
\]
Thus,
\begin{equation}\label{eq:delta-orprmaj}
\Delta(G_x)=3n/4
\iff
\mathsf{OR}_{n/2}\!\left(
\mathsf{PrMaj}_{n/2}(x^{(1)}),\ldots,
\mathsf{PrMaj}_{n/2}(x^{(n/2)})
\right)=1.
\end{equation}

Given an input
$x=(x^{(1)},\ldots,x^{(n/2)})$
to $\mathsf{OR}_{n/2}\circ\mathsf{PrMaj}_{n/2}$, we can simulate queries to the adjacency matrix of $G_x$ as follows: queries to edges inside $L$ or inside $R$ can be answered without querying $x$, while a query to an edge $\{u_i,v_j\}$ can be answered using a single query to $x^{(i)}_j$.

Hence, any $T$-query quantum algorithm for $\mathsf{MD}_n$, when run on $G_x$, can be simulated using $T$ queries to $x$. Let $v$ be the maximum-degree vertex output by this algorithm. We then compute $\deg_{G_x}(v)$ using at most $n-1$ additional queries. We output $1$ if $\deg_{G_x}(v)=3n/4$, and output $0$ otherwise. By \Cref{eq:delta-orprmaj}, this gives a bounded-error quantum algorithm for
$\mathsf{OR}_{n/2}\circ\mathsf{PrMaj}_{n/2}$
using at most $T+n-1$ queries.

Therefore,
\[
    \Q\!\left(\mathsf{OR}_{n/2}\circ\mathsf{PrMaj}_{n/2}\right) \leq \Q(\mathsf{MD}_n)+n-1.\qedhere
\]
\end{proof}

\section{Generalizing the notion of maximum degree}

For $k=1$, $\mcut_{k,n}(G)$ is the maximum degree of $G$, so $\Q(\mcut_{1,n})=\Theta(n^{3/2})$ by \Cref{thm:max-degree}.
We now consider general $k$.

\thmmaxcut*

\subsection{Upper bound}
We first give a quantum algorithm for deciding whether $\mcut_{k,n}(G)$ is at least a given threshold, and then use it to compute $\mcut_{k,n}(G)$.

\begin{lemma}\label{lem:maxcut-threshold}
Let $1\leq k\leq n$ be integers. For every integer $t$, there is a bounded-error quantum algorithm that decides whether $\mcut_{k,n}(G)\geq t$ using $O(n^{2-\frac{1}{k+1}})$ queries to an $n$-vertex graph $G$.
\end{lemma}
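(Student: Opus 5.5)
Our plan is to run the MNRS quantum walk search of \Cref{thm:mnrs} on the Johnson graph $J(n,r)$, with $r$ to be chosen later. A state is an $r$-subset $R\subseteq[n]$. Its data structure $D(R)$ records, for every $u\in R$, the full adjacency row of $u$, that is, all bits $\{u,w\}$ for $w\in[n]$. Storing the full rows lets us compute $\deg_G(u)$ for every $u\in R$ and all edges inside $R$ without any further queries. Consequently, for every $k$-subset $S\subseteq R$ we can compute $|\partial_G(S)|=\sum_{v\in S}\deg_G(v)-2|E(G[S])|$ at no query cost.

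We call $R$ \emph{marked} if it contains some $k$-set $S$ with $|\partial_G(S)|\geq t$. Checking is then free, so $\bm C=0$ (we assume $k\leq r$, and treat $k$ near $n$ separately below).

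Next we bound the costs. The setup cost is $\bm S=rn$, since we query $r$ full rows. The update cost is $\bm U=O(n)$: replacing one element of $R$ means unquerying one row and querying another, and this is implemented reversibly in the standard way. By \Cref{thm:ergodicityJohnsonGraph} the walk is ergodic and reversible, and the spectral gap of $J(n,r)$ is $\delta=\Omega(1/r)$. If a witness $S$ exists, the fraction of $r$-sets containing it is $\binom{n-k}{r-k}/\binom{n}{r}=\Omega((r/n)^k)$ for $r\le n/2$ and constant $k$, so we may take $\varepsilon=\Omega((r/n)^k)$. The total cost is therefore
\[
O\!\left(rn+(n/r)^{k/2}\sqrt r\, n\right).
\]
Balancing the two terms gives $r=(n/r)^{k/2}\sqrt r$, i.e.\ $r^{(k+1)/2}=n^{k/2}$, so $r=n^{k/(k+1)}$. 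The resulting cost is $O(n\cdot n^{k/(k+1)})=O(n^{2-\frac1{k+1}})$. Since \Cref{thm:mnrs} reports whether a marked state exists with probability at least $2/3$, this decides whether $\mcut_{k,n}(G)\ge t$.

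The main obstacle we expect is handling non-constant $k$, since the lemma is stated for all $1\le k\le n$. For large $k$ the bound $\binom{n-k}{r-k}/\binom nr\ge (r/n)^k$ still holds as a lower bound whenever $r\ge k$. Alternatively, when $k\ge$ roughly $\sqrt n$ (or whenever $n^{k/(k+1)}$ rounds to a value at least $k$ that is close to $n$), we can simply query the whole graph with $O(n^2)$ queries. This is within the target because $n^{2-1/(k+1)}\ge n^2/n^{1/(k+1)}$ and $n^{1/(k+1)}=O(1)$ once $k\ge \log n$.

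For intermediate $k$ we need to check that the choice $r=\max(k,\lceil n^{k/(k+1)}\rceil)$ still balances. Since $n^{k/(k+1)}\ge k$ for $k\le \log n$, roughly, this choice is fine in that range. The edge cases $r\ge n/2$ and $r=n$ are handled by reading the whole graph, which costs $O(n^2)=O(n\,r)$. We will also verify that the walk's data-structure update is implementable reversibly with $O(n)$ queries, which is standard.
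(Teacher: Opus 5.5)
Your plan matches the paper's proof: the same MNRS walk on $J(n,r)$ storing full adjacency rows, the same marking rule, costs $O(rn)$, $O(n)$ and $0$, gap $\delta>1/r$, the choice $r\approx n^{k/(k+1)}$, and reading the whole graph when $k>\log n$. For constant $k$ your argument is complete.

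The gap is in uniformity over $k$, which the lemma claims and which \Cref{lem:UpperBoundMaxCut} later relies on for general $k$. Your justification for non-constant $k$ is the claim that $\binom{n-k}{r-k}/\binom nr\ge (r/n)^k$ whenever $r\ge k$, and this is false. The ratio equals $\binom rk/\binom nk=\prod_{i=0}^{k-1}\frac{r-i}{n-i}$, and every factor is at most $r/n$, so the inequality is reversed. For example, $n=10$, $r=5$, $k=2$ gives $10/45<1/4$. Checking $r\ge k$ does not address this. What is missing is a lower bound on $\varepsilon$ whose loss relative to $(r/n)^k$ does not inflate the final bound when $k$ grows with $n$ (say $k=\sqrt{\log n}$). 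The trivial algorithm does not cover that range, since there $n^{1/(k+1)}$ is not bounded.

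The paper repairs this as follows.
\begin{itemize}
\item It uses $\binom rk\ge (r/k)^k$ and $\binom nk\le (en/k)^k$, so $\varepsilon\ge (r/(en))^k$ holds for every $k\le r$.
\item It sets $r=\lceil a\rceil$ with $a=(en)^{k/(k+1)}$. Since $a^{k+1}=(en)^k$, the walk term satisfies $n\sqrt r\,(en/r)^{k/2}\le na$, so the total cost is $O(na)$.
\item Because $a\le e\,n^{k/(k+1)}$, the constant is absolute. The $e^{k/2}$ loss in $1/\sqrt{\varepsilon}$ is absorbed by enlarging $r$.
\end{itemize}

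Alternatively, you can keep your $r=\lceil n^{k/(k+1)}\rceil\ge\sqrt n$ and use
$\prod_{i<k}\frac{r-i}{n-i}\ge (r/n)^k\prod_{i<k}(1-i/r)\ge (r/n)^k\bigl(1-\frac{k^2}{2r}\bigr)\ge\frac12 (r/n)^k$.
This holds for $k\le\log n$ once $n$ is large enough that $\log^2 n\le\sqrt n$. With either repair, the rest of your proof goes through.
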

\begin{proof}
Let $G=(V,E)$ be the input graph. We consider two cases depending on $k$.

First suppose that $1\leq k\leq \log n$. We apply the quantum-walk algorithm from \Cref{thm:mnrs} to the Johnson graph $J(n,r)$, whose vertices are the $r$-element subsets of $V$ with two subsets adjacent if they differ in exactly one element. For each $R\in\binom{V}{r}$, the algorithm stores the neighborhood in $G$ of every vertex in $R$ in the database.

We declare $R$ to be marked if there exists a set $S\subseteq R$ with $|S|=k$ such that $|\partial_G(S)|\geq t$.
Since all adjacency information concerning the vertices in $R$ is stored, whether $R$ is marked can be determined without making any additional queries. Then, the query costs of the setup, the update and the checking steps are as follows:
\begin{itemize}
    \item The setup cost is $\bm{S}=O(rn)$, since we query all edges of the input graph incident to the $r$ vertices in $R$.
    \item The update cost is $\bm{U}=O(n)$, since replacing one vertex of $R$ requires uncomputing the stored adjacency information of the removed vertex and computing the adjacency information of the newly added vertex.
    \item The checking cost is $\bm{C}=0$, since whether $R$ is marked can be determined from the stored adjacency information.
\end{itemize}
The spectral gap of $J(n,r)$ is $\delta=\frac{n}{r(n-r)}>\frac1r$ \cite[Section~12.3.2]{brouwer2011spectra}. Moreover, if there exists a set $S\subseteq V$ with $|S|=k$ and
$|\partial_G(S)|\geq t$, then every $R\in\binom{V}{r}$ containing $S$ is marked. Hence the number of marked vertices is at least
$\binom{n-k}{r-k}$.
Using \Cref{thm:ergodicityJohnsonGraph}, the stationary distribution of $J(n,r)$ satisfies $\pi(R) = 1/\binom{n}{r}$ for all vertices $R$. Hence, using $M$ to denote the set of marked vertices, 
\[
\sum_{x \in M}\pi(x) \geq \frac{\binom{n-k}{r-k}}{\binom nr}
= \frac{\binom rk}{\binom nk}.
\]
Thus setting $\varepsilon$ equal to the RHS above gives a valid setting for \Cref{thm:mnrs}. Using \Cref{thm:mnrs} we get
\[
O\left(rn+n\sqrt{r}\sqrt{\frac{\binom{n}{k}}{\binom{r}{k}}}\right)
=
O\left(rn+n\sqrt{r}\left(\frac{en}{r}\right)^{k/2}\right),
\]
where we used $\binom{n}{k}\leq (en/k)^k$ and $\binom{r}{k}\geq (r/k)^k$. Set $a=(en)^{k/(k+1)}$. If $\lceil a\rceil\geq n$, we query the entire graph using $O(n^2)=O(na)$ queries. Otherwise, take $r=\lceil a\rceil$, so that $k\leq r<n$. Since $a^{k+1}=(en)^k$ and $r\geq a$, 
\[
rn+n\sqrt{r}\left(\frac{en}{r}\right)^{k/2}
=
rn+n(en)^{k/2}r^{(1-k)/2}
\leq rn+na
=
O(na)
=
O\left(n^{2-\frac1{k+1}}\right).
\]
For $k>\log n$, we simply query the entire graph using $\binom{n}{2}=O(n^2)$ queries. Since $n^2\leq e\,n^{2-1/(k+1)}=O\left(n^{2-\frac1{k+1}}\right)$ for $k>\log n$, this gives the claimed bound in all cases.
\end{proof}

\begin{lemma}\label{lem:maxcut-degree-sum}
Let $1\leq k\leq n$ and $G$ be an $n$-vertex graph. Let $D$ consist of $k$ vertices of highest degree in $G$, and write $A=\sum_{v\in D}\deg_G(v)$. Then
\[
    A-k(k-1)\leq \mcut_{k,n}(G)\leq A.
\]
\end{lemma}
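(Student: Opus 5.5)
The plan rests on a single identity for the cut size of a $k$-set. For any $S\subseteq V(G)$ with $|S|=k$, each edge with both endpoints in $S$ is counted twice in $\sum_{v\in S}\deg_G(v)$. Each edge with exactly one endpoint in $S$ is counted once. Hence
\[
|\partial_G(S)| = \sum_{v\in S}\deg_G(v) - 2|E(G[S])|.
\]
Both inequalities follow from this identity together with the extremal choice of $D$.

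For the upper bound, I will fix an arbitrary $k$-set $S$. Since $|E(G[S])|\geq 0$, the identity gives $|\partial_G(S)|\leq \sum_{v\in S}\deg_G(v)$. Because $D$ consists of $k$ vertices of highest degree, the degree sum over any $k$-set is at most the degree sum over $D$. The cleanest way to see this is to sort the vertices by degree: the $i$-th largest degree in $S$ is at most the $i$-th largest degree overall. Therefore $|\partial_G(S)|\leq A$, and taking the maximum over $S$ gives $\mcut_{k,n}(G)\leq A$.

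For the lower bound, I will apply the identity to $S=D$ itself. Since $G[D]$ is a simple graph on $k$ vertices, $|E(G[D])|\leq\binom{k}{2}$. This gives
\[
|\partial_G(D)|\geq A-2\binom{k}{2}=A-k(k-1).
\]
Since $D$ is a feasible $k$-set, $\mcut_{k,n}(G)\geq |\partial_G(D)|\geq A-k(k-1)$.

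There is no real obstacle here. The only point needing care is the exchange/sorting argument that $D$ maximizes the degree sum among $k$-sets when degrees tie. This holds for any choice of $D$, since all such choices have the same degree sum.
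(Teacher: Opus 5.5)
Your proposal is correct and follows the paper's proof essentially line for line. The upper bound is $|\partial_G(S)|\le\sum_{v\in S}\deg_G(v)\le A$, and the lower bound applies $|\partial_G(D)|=A-2|E(G[D])|$ together with $|E(G[D])|\le\binom{k}{2}$.
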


\begin{proof}
For every $k$-vertex set $S$, we have
$|\partial_G(S)|\leq\sum_{v\in S}\deg_G(v)\leq A$.
For the lower bound, choose $S=D$ and observe that $|\partial_G(D)|=A-2|E(G[D])|\geq A-k(k-1)$.
\end{proof}

\begin{lemma}\label{lem:maxcut-interval}
Let $1\leq k\leq \log n$. There is a quantum algorithm that, given query access to an $n$-vertex graph $G$, outputs an integer $A$ satisfying
\[
    A-k(k-1)\leq \mcut_{k,n}(G)\leq A
\]
with probability at least $5/6$, using $O(kn^{3/2}\log (k+1))$ queries.
\end{lemma}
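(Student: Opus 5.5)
We find the $k$ vertices of largest degree one at a time with quantum maximum finding, and then apply \Cref{lem:maxcut-degree-sum}. The process runs in $k$ rounds. In round $i$, we have already selected distinct vertices $d_1,\ldots,d_{i-1}$. We apply \Cref{thm:maximum_finding} to the set $V\setminus\{d_1,\ldots,d_{i-1}\}$ with the function $f(v)=\deg_G(v)$. To break ties consistently, we compare pairs $(\deg_G(v),v)$ lexicographically, so the maximizer is unique. Each evaluation of $f$ is carried out exactly and reversibly by querying all $n-1$ potential edges at $v$ and then uncomputing. This costs $O(n)$ queries and has no error, so it can serve as the oracle in \Cref{thm:maximum_finding}. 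One round therefore uses $O(\sqrt{n}\cdot n)$ queries. After all $k$ rounds, we compute $A=\sum_{i}\deg_G(d_i)$ classically with $kn$ further queries, and output it.

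\textbf{Correctness.} If every round returns the true maximizer over the remaining vertices, then $\{d_1,\ldots,d_k\}$ is a set $D$ of $k$ vertices of highest degree. \Cref{lem:maxcut-degree-sum} then gives $A-k(k-1)\leq\mcut_{k,n}(G)\leq A$.

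\textbf{Error control.} This is the only nontrivial step, and the bound $O(kn^{3/2}\log(k+1))$ tells us how to handle it. Each round succeeds with probability at least $2/3$. We boost this to failure probability at most $1/(6k)$ by repeating the maximum-finding call $O(\log(6k))=O(\log(k+1))$ times and keeping the candidate with the largest value of $(\deg_G(v),v)$. Computing each candidate's degree exactly costs $O(n)$ queries per repetition, which is lower order. The key point is that the repetitions need no majority vote. Each candidate can be checked directly, so taking the maximum of the checked candidates fails only if every repetition misses the true maximizer. That happens with probability at most $(1/3)^{O(\log k)}\leq 1/(6k)$.

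A union bound over the $k$ rounds puts the total failure probability at most $1/6$, so the algorithm succeeds with probability at least $5/6$. The total query count is
\[
k\cdot O(\log(k+1))\cdot O(n^{3/2}) + O(kn) = O(kn^{3/2}\log(k+1)).
\]
The hypothesis $k\leq\log n$ is not needed for the argument itself. It only ensures that the bound is meaningful, since $k\leq n$ keeps the sets nonempty.

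The main subtlety we expect concerns subroutines in \Cref{thm:maximum_finding}. We must make sure the degree oracle is exact and clean, that is, it computes $\deg_G(v)$ into a register and uncomputes its garbage. Only then can it be invoked inside maximum finding with the claimed $O(n)$ cost per call and without adding error.
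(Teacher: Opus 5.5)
Your proposal is correct and matches the paper's proof: iterated quantum maximum finding over the not-yet-selected vertices with exact degree evaluation, each round boosted to error $1/(6k)$ at an $O(\log(k+1))$ overhead, a union bound over the $k$ rounds, and then summing the exact degrees and applying \Cref{lem:maxcut-degree-sum}. Your extra details are the standard points the paper leaves implicit: lexicographic tie-breaking, boosting by keeping the best verified candidate rather than taking a majority vote, and a clean reversible degree oracle. You are also right that the hypothesis $k\leq\log n$ is not used here; it is only needed later, in \Cref{lem:UpperBoundMaxCut}, to absorb this cost.
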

\begin{proof}
For $1\leq k \leq \log n$, find the $k$ highest-degree vertices by iteratively applying the maximum-finding procedure from the proof of \Cref{thm:max-degree}, excluding previously selected vertices but always computing degrees in the original graph $G$. Reduce the error to $1/(6k)$ per invocation, incurring an $O(\log (k+1))$ factor. Compute the degrees of the selected vertices using at most $k(n-1)$ additional queries, and output their sum. Correctness follows from a union bound and \Cref{lem:maxcut-degree-sum}.
\end{proof}

\begin{lemma}\label{lem:UpperBoundMaxCut}
Let $1\leq k\leq n$. Then
\[
    \Q(\mcut_{k,n})
    = O\left(n^{2-1/(k+1)}\log (k+1)\log\log (k+2) \right).
\]
\end{lemma}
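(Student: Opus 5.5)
We combine the interval-finding step of \Cref{lem:maxcut-interval} with a binary search driven by the threshold tester of \Cref{lem:maxcut-threshold}, and treat large $k$ separately.

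\emph{Case $k>\log n$.} The algorithm queries the whole graph with $\binom n2$ queries. As in the proof of \Cref{lem:maxcut-threshold}, $n^2\le e\,n^{2-1/(k+1)}$ in this range, so the claimed bound holds trivially.

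\emph{Case $1\le k\le\log n$.} The algorithm proceeds in three steps.
\begin{enumerate}
\item Run the procedure of \Cref{lem:maxcut-interval}. With probability at least $5/6$ it returns an integer $A$ with $\mcut_{k,n}(G)\in[A-k(k-1),A]$, using $O(kn^{3/2}\log(k+1))$ queries.
\item Binary search over the $k(k-1)+1$ candidate values in this interval. At each step, test a threshold $t$ with \Cref{lem:maxcut-threshold}, asking whether $\mcut_{k,n}(G)\ge t$. The search uses $m=\lceil\log(k(k-1)+1)\rceil+1=O(\log(k+1))$ tests.
\item Control the error: amplify each test by majority vote over $O(\log m)=O(\log\log(k+2))$ repetitions, so that each test errs with probability at most $1/(6m)$.
\end{enumerate}
By a union bound, all tests are correct with probability at least $5/6$. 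Combined with the first step, the algorithm outputs the correct value with probability at least $2/3$. The binary search then costs
\[
O\!\left(n^{2-\frac1{k+1}}\log(k+1)\log\log(k+2)\right).
\]

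\emph{Main obstacle: the interval-finding cost.} We must check that this step is dominated by the search cost, i.e.\ that $kn^{3/2}\log(k+1)=O\bigl(n^{2-1/(k+1)}\log(k+1)\bigr)$ for $k\le\log n$. This reduces to
\[
k\le n^{1/2-1/(k+1)}.
\]
For $k\ge 2$, the exponent is at least $1/6$, and $\log n\le O(n^{1/6})$, so the inequality holds up to constants. The case $k=1$ is the only exception. There the exponent is $0$, but $k$ is constant, so the interval step costs $O(n^{3/2})$, which matches $n^{2-1/2}$. Hence in all cases the total is $O\bigl(n^{2-1/(k+1)}\log(k+1)\log\log(k+2)\bigr)$.

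One remaining point is the edge case $k\ge n$ or $r\ge n$ inside the threshold tester. This is already handled in \Cref{lem:maxcut-threshold} by falling back to querying the full graph, which stays within the same bound.
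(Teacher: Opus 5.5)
Your proposal is correct and follows the paper's proof essentially step for step. It obtains the interval from \Cref{lem:maxcut-interval}, binary searches over $[A-k(k-1),A]$ using $O(\log(k+1))$ amplified calls to the threshold tester of \Cref{lem:maxcut-threshold}, and queries the whole graph when $k>\log n$; your explicit check that the $O(kn^{3/2}\log(k+1))$ interval cost is absorbed (via $k\le n^{1/2-1/(k+1)}$ for $2\le k\le\log n$, and directly for $k=1$) fills in a step the paper only asserts.
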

\begin{proof}
For $1\leq k \leq \log n$, using \Cref{lem:maxcut-interval} we output the value of $A$ (with error at most $1/6$) and then use binary search to find the largest value of $t \in [A-k(k-1), A]$ such that $\mcut_{k,n}(G)\geq t$. Binary search requires at most $2\log(k+1)$ threshold tests. We reduce the error of each test to $1/(12\log(k+1))$, incurring an $O(\log\log(k+2))$ factor in its query cost. Conditioned on $A$ satisfying the inequalities in \Cref{lem:maxcut-degree-sum}, a union bound shows that binary search returns the correct value with probability at least $5/6$. Combining these two subroutines gives us a $1/3$-bounded error algorithm that ensures
\begin{align*}
    \Q(\mcut_{k,n})
    &=O\left(kn^{3/2}\log(k+1)
    +n^{2-1/(k+1)}\log(k+1)\log\log(k+2)\right)\\
    &=O\left(n^{2-1/(k+1)}\log(k+1)\log\log(k+2)\right).
\end{align*}
The last equality holds since $1\leq k \leq \log n$ and the first term is absorbed into the second term. 

For $k > \log n$, we query $E(G)$ using ${n \choose 2 }=O(n^{2-1/(k+1)})$ queries. This concludes the proof.
\end{proof}

\subsection{Lower bounds via graph collision}

Recall the definitions of $\mathsf{GC}^G_k$ and $\PrMaj_n$ from \Cref{def:kGC,def:prmaj}. We show a query-preserving reduction from $\mathsf{GC}^G_k\circ\PrMaj_n$ to $\mcut_{k,3n}$.

\begin{lemma}\label{thm:GClbMaxCut}
For every even positive integer $n$, integer $1 \leq k \leq n$, and every $n$-vertex graph $G=(V,E)$,
\[
    \Q(\mcut_{k,3n})
    =
    \Omega\bigl(n\cdot \Q(\mathsf{GC}^G_k)\bigr).
\]
\end{lemma}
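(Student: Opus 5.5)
The plan is to give a query-preserving reduction from $\mathsf{GC}^G_k\circ\PrMaj_n$ to $\mcut_{k,3n}$ and then apply \Cref{thm:Q_composition,thm:prmaj_lower_bound}. The input is $x=(x_v)_{v\in V}$ with each $x_v\in\zone^n$ of weight $n/2$ or $n/2+1$, and $v$ is marked iff $\wt(x_v)=n/2+1$. From $x$ I construct a graph $H_x$ on $V\sqcup A\sqcup B$ with $|A|=|B|=n$ and show that $\mcut_{k,3n}(H_x)=3kn/2+k$ iff $(\mathsf{GC}^G_k\circ\PrMaj_n)(x)=1$, and that $\mcut_{k,3n}(H_x)<3kn/2+k$ otherwise. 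Since $\mcut$ is a function, the reduction needs no extra queries. This gives $\Q(\mcut_{k,3n})\geq\Q(\mathsf{GC}^G_k\circ\PrMaj_n)=\Omega(\Q(\mathsf{GC}^G_k)\cdot n)$.

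\textbf{Construction.}
\begin{itemize}
\item Inside $V$, place the edges of $\overline G$.
\item Write $B=\{b_1,\dots,b_n\}$ and join $v\in V$ to $b_j$ iff $(x_v)_j=1$. This is the only input-dependent part, and each such query costs one query to $x$.
\item Use $A$ to pad fixed degrees. Each $v$ has $\deg_{\overline G}(v)\le n-1$, so I join $v$ to exactly $n-\deg_{\overline G}(v)\ge 1$ vertices of $A$. This makes the fixed part of $\deg_{H_x}(v)$ equal to $n$, so $\deg_{H_x}(v)=n+\wt(x_v)\in\{3n/2,3n/2+1\}$.
\item $A$ and $B$ are independent sets.
\end{itemize}

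\textbf{Degree bound on $A\cup B$.} Each $b_j$ has degree at most $|V|=n$. I also need every $a\in A$ to have degree at most $n$, and this holds trivially since $a$ is adjacent only to vertices of $V$. So any choice of how to distribute the $A$-edges works, for instance round-robin.

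\textbf{Cut analysis.}
\begin{itemize}
\item For $S\subseteq V$ with $|S|=k$, every edge inside $S$ is an edge of $\overline G[S]$, so
\[
|\partial_{H_x}(S)|=\sum_{v\in S}\deg_{H_x}(v)-2|E(\overline G[S])|=3kn/2+\#\{\text{marked } v\in S\}-2|E(\overline G[S])|.
\]
This equals $3kn/2+k$ iff $S$ is all marked and $S$ is a clique in $G$. Otherwise it is at most $3kn/2+k-1$.
\item If $S$ contains $s\ge1$ auxiliary vertices, then $|\partial(S)|\le\sum_{v\in S}\deg(v)\le (k-s)(3n/2+1)+sn<3kn/2+k$, since $n/2+1>0$.
\end{itemize}
Hence the threshold $3kn/2+k$ decides the composed function exactly.

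\textbf{Expected obstacle.} The main point needing care is the degree padding when $\deg_{\overline G}(v)$ is small, since then $v$ needs up to $n$ edges into $A$. This is fine because $|A|=n$.

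I will also check the edge case $k>$ clique number, where both sides are trivially consistent. When $\Q(\mathsf{GC}^G_k)=0$ for a constant function, the bound holds vacuously.
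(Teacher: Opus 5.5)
Your proposal is correct and uses the same reduction as the paper. It places $\overline G$ inside $V$, uses one auxiliary side to pad the fixed degree of each $v\in V$ to exactly $n$, and uses the other side to encode $x_v$. It then applies the same cut analysis with threshold $k(3n/2+1)$ and concludes via \Cref{thm:Q_composition,thm:prmaj_lower_bound}. The only difference is cosmetic: the paper fixes the padding edges as $\{v_i,u_j\}$ iff $i=j$ or $\{v_i,v_j\}\in E(G)$, whereas you allow any choice of $\deg_G(v)+1$ neighbours in $A$, which works equally well since every auxiliary vertex has degree at most $n$ either way.
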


\begin{proof}
Let $D=\{x\in\{0,1\}^n \mid \wt(x)\in\{n/2,n/2+1\}\}$.
An input to $\mathsf{GC}^G_k\circ\PrMaj_n$ is $x=(x_v)_{v\in V}\in D^V$, where $v\in V$ is marked if and only if $\wt(x_v)=n/2+1$, i.e., $\PrMaj_n(x_v) = 1$.

We construct a graph $H_x$ on three sets of vertices
$V,L=\{u_1,\ldots,u_n\},R=\{w_1,\ldots,w_n\}$,
each of size $n$.

Recall that $\bar G=(V,\bar E)$ denotes the complement of $G$. Inside $V$, we place exactly the edges of $\bar G$. For every $i,j\in[n]$, add the edge $\{v_i,u_j\}$ if and only if either $i=j$ or $\{v_i,v_j\}\in E(G)$. Thus, for every $v_i\in V$, the number of its neighbors in $V\sqcup L$ is
\[
    \deg_{\bar G}(v_i)+\deg_G(v_i)+1=n.
\]
Finally, for every $i,j\in[n]$, add the edge $\{v_i,w_j\}$ if and only if $x_{v_i}[j]=1$. Therefore,
\[
    \deg_{H_x}(v_i)
    =
    n+\wt(x_{v_i})
    \in
    \left\{\frac{3n}{2},\frac{3n}{2}+1\right\}.
\]
See \Cref{fig:gc-to-maxcut} for an illustration of the construction.

\begin{figure}[ht]
\centering
\begin{tikzpicture}[
scale=0.98,
every node/.style={font=\small},
vertex/.style={circle, fill=black, inner sep=1.3pt},
special/.style={circle, draw, fill=white, inner sep=2pt},
component/.style={draw, rounded corners=2pt, line width=0.6pt}
]

\draw[component] (-6.5,-2.2) rectangle (-4.5,2.2);
\node at (-5.5,2.55) {$L$};

\node[vertex] (u1) at (-5.5,1.55) {};
\node[right=2pt] at (u1.east) {$u_1$};

\node[vertex] (ui) at (-5.5,0.75) {};
\node[right=2pt] at (ui.east) {$u_i$};

\node[vertex] (uj) at (-5.5,-0.15) {};
\node[right=2pt] at (uj.east) {$u_j$};

\node at (-5.5,-0.75) {$\vdots$};

\node[vertex] (un) at (-5.5,-1.55) {};
\node[right=2pt] at (un.east) {$u_n$};

\node at (-5.5,-2.55) {\scriptsize $|L|=n$};

\draw[component] (-1.8,-2.2) rectangle (1.8,2.2);
\node at (0,2.55) {$V$};

\node[special] (vi) at (0,0.15) {};
\node[below right=1pt] at (vi.south east) {$v_i$};

\node[vertex] (v1) at (-0.95,1.35) {};
\node[left=2pt] at (v1.west) {$v_1$};

\node[vertex] (v2) at (0.95,1.1) {};
\node[right=2pt] at (v2.east) {$v_2$};

\node[vertex] (vjv) at (-1.0,-0.75) {};
\node[left=2pt] at (vjv.west) {$v_j$};

\node[vertex] (vn) at (1.0,-1.15) {};
\node[right=2pt] at (vn.east) {$v_n$};

\draw (v1) -- (v2);
\draw (v2) -- (vn);
\draw (vn) -- (vjv);

\node at (0,-2.55) {\scriptsize edges of $\bar G$};

\draw[component] (4.7,-2.2) rectangle (6.9,2.2);
\node at (5.8,2.55) {$R$};

\node[vertex] (w1) at (5.8,1.55) {};
\node[right=2pt] at (w1.east) {$w_1$};

\node[vertex] (w2) at (5.8,0.8) {};
\node[right=2pt] at (w2.east) {$w_2$};

\node at (5.8,0.1) {$\vdots$};

\node[vertex] (wh) at (5.8,-0.8) {};
\node[right=2pt] at (wh.east) {$w_{n/2+1}$};

\node at (5.8,-1.15) {$\vdots$};

\node[vertex] (wn) at (5.8,-1.8) {};
\node[right=2pt] at (wn.east) {$w_n$};

\node at (5.8,-2.55) {\scriptsize $|R|=n$};

\draw (vi) -- (ui);
\draw[dashed] (vi) -- (uj);
\draw[dashed] (vi) -- (vjv);

\draw[] (vjv) -- (uj);

\node at (-3.0,-1.22) {\scriptsize $\{v_i,u_j\}\in E(H_x)$};
\node at (-3.0,-1.55) {\scriptsize $\iff i=j$ or};
\node at (-3.0,-1.88) {\scriptsize $\{v_i,v_j\}\in E(G)$};

\draw[dotted] (vi) -- (w1);
\draw[dotted] (vi) -- (w2);
\draw[dotted] (vi) -- (wh);
\draw[dotted] (vi) -- (wn);

\node at (3.05,1.55) {\scriptsize $\{v_i,w_j\}\in E(H_x)$};
\node at (3.05,1.22) {\scriptsize iff $x_{v_i}[j]=1$};

\draw[dashed, rounded corners=3pt, line width=0.6pt]
    (-6.9,-3.0) rectangle (7.3,3.0);

\node at (0.2,3.38) {$H_x$};
\node at (0.2,-3.35) {$x_{v_i}\in\{0,1\}^n,\qquad \wt(x_{v_i})\in\{n/2,n/2+1\}$};

\end{tikzpicture}

\caption{
Construction of the graph $H_x$ from an input $x=(x_v)_{v\in V}$ to $\mathsf{GC}^G_k\circ\PrMaj_n$. The graph $H_x$ has vertex set $V\sqcup L\sqcup R$, where $L=\{u_1,\ldots,u_n\}$ and $R=\{w_1,\ldots,w_n\}$. Inside $V$, we place the complement graph $\bar G$. For every $i,j\in[n]$, we add the edge $\{v_i,u_j\}$ if and only if either $i=j$ or $\{v_i,v_j\}\in E(G)$. Finally, for every $i,j\in[n]$, we add the edge $\{v_i,w_j\}$ if and only if $x_{v_i}[j]=1$. Consequently, every vertex $v_i\in V$ has degree $n+\wt(x_{v_i})\in\{3n/2,3n/2+1\}$.}
\label{fig:gc-to-maxcut}
\end{figure}
We now characterize the maximum $k$-cut in $H_x$. Every vertex in $L\sqcup R$ has degree at most $n$, whereas every vertex in $V$ has degree at most $3n/2+1$. Hence, any $k$-set containing a vertex from $L\sqcup R$ induces a cut of size strictly less than $k(3n/2+1)$.
On the other hand, for every $S\subseteq V$ with $|S|=k$,
\[
    |\partial_{H_x}(S)|
    =
    \sum_{v\in S}\bigl(n+\wt(x_v)\bigr)-2|E(\bar G[S])|
    \leq
    k\left(\frac{3n}{2}+1\right).
\]
Moreover, equality holds if and only if $\wt(x_v)=n/2+1$ for every $v\in S$ and $E(\bar G[S])=\emptyset$. Equivalently, $\mcut_{k,3n}(H_x)=k(3n/2+1)$ if and only if $G$ contains a $k$-clique consisting entirely of marked vertices. Therefore,
\[
    (\mathsf{GC}^G_k\circ\PrMaj_n)(x)=1
    \iff
    \mcut_{k,3n}(H_x)\geq k\left( \frac{3n}{2}+1 \right).
\]
Finally, the reduction is query preserving. Every edge of $H_x$ except those between $V$ and $R$ is fixed and can therefore be answered without querying $x$. A query to an edge $\{v_i,w_j\}$ can be answered using one query to $x_{v_i}[j]$. Hence,
\[
    \Q(\mathsf{GC}^G_k\circ\PrMaj_n)
    =
    O(\Q(\mcut_{k,3n})).
\]
By \Cref{thm:Q_composition} and the fact that $\Q(\PrMaj_n)=\Theta(n)$, we obtain $\Q(\mcut_{k,3n})=\Omega\bigl(n\cdot\Q(\mathsf{GC}^G_k)\bigr)$, which completes the proof.\qedhere
\end{proof}

We now prove \Cref{thm:max-k-cut}.

\begin{proof}[Proof of \Cref{thm:max-k-cut}]
For constant $k$, the upper bound follows from \Cref{lem:UpperBoundMaxCut}. The second assertion is precisely \Cref{thm:GClbMaxCut}. We now prove the unconditional lower bound.
    Applying \Cref{thm:GClbMaxCut} with $G=K_n$ and using \Cref{fact:gc-lower-bound}, we obtain, for every constant $k\geq 1$,
\[
    \Q(\mcut_{k,3n})=\Omega\bigl(n\cdot\Q(\mathsf{GC}^{K_n}_k)\bigr)
    =\Omega(n^{3/2}).
\]
For arbitrary sufficiently large $n$, let $N=2\lfloor n/6\rfloor$. Apply the construction in \Cref{thm:GClbMaxCut} with $G=K_N$, and pad the resulting $3N$-vertex graph with $n-3N$ isolated vertices. Every vertex has degree at most $3N/2+1$, so no $k$-set containing an added vertex can attain the threshold $k(3N/2+1)$. Thus the padded graph attains this threshold if and only if the original constructed graph does, preserving the reduction. Since $N=\Theta(n)$, the $\Omega(n^{3/2})$ lower bound follows.
\end{proof}

\corunconditionalmaxcut*
\begin{proof}[Proof of \Cref{cor:max-k-cut-unconditional}]
Fix $k>\ell\geq 2$ and let $N=2\lfloor n/6\rfloor$.
By \cite[Theorem~1.2]{Gilani_Zhou_2026}, for all sufficiently large $N$, there is an $N$-vertex graph $G$ satisfying
\[
\Q(\mathsf{GC}^G_k)\geq N^{\ell(2k-\ell+1)/(2k(\ell+1))-o(1)}.
\]
Applying \Cref{thm:GClbMaxCut} gives
\[
\Q(\mcut_{k,3N})\geq N^{1+\ell(2k-\ell+1)/(2k(\ell+1))-o(1)}.
\]
Pad the constructed $3N$-vertex instances with $n-3N$ isolated vertices. Every vertex in the constructed graph has degree at most $3N/2+1$, so no $k$-set containing an added vertex can attain the reduction's threshold $k(3N/2+1)$. Thus padding preserves the reduction. Since $N=\Theta(n)$, the claimed lower bound follows for arbitrary sufficiently large $n$.

To optimize the exponent, write $f_k(\ell)=1+\ell(2k-\ell+1)/(2k(\ell+1))$. Then
\[
f_k(\ell+1)-f_k(\ell)
=\frac{2k+2-(\ell+1)(\ell+2)}{2k(\ell+1)(\ell+2)}.
\]
Hence an optimal choice is the smallest integer $\ell\geq 2$ satisfying $(\ell+1)(\ell+2)\geq 2k+2$, namely $\ell=\lceil(\sqrt{8k+9}-3)/2\rceil$. For $k\geq 3$, this choice satisfies $2\leq\ell<k$.
\end{proof}

\section*{Acknowledgments}

N.R.\ thanks Emanuel Juliano for helpful discussions. A.S.G received support from the National Science Foundation (grant 26-17356). S.P.\ and N.R.\ acknowledge support from the Dutch Ministry of Education, Culture, and Science through Gravitation project ``Challenges in Cyber Security - 024.006.037'' for this work.

\section*{AI Statement}
We used OpenAI’s ChatGPT (GPT-5.5 and GPT-5.6) in developing some of the proofs and in editing the exposition. The authors first reduced the maximum-out-degree lower-bound problem to a problem about row and column sums of a Boolean matrix, which we then supplied to ChatGPT. ChatGPT provided a lower-bound argument using the positive-weight adversary method. The authors subsequently recognized that the construction could be presented more simply as a reduction from $\OR \circ \PrMaj$, which is the approach used in this paper. The graph-collision-based lower bounds were also developed through iterative discussions with ChatGPT. The authors checked and refined the resulting arguments and take full responsibility for all results and proofs presented here.

\bibliography{bibo}

\end{document}